\theoremstyle{plain}
\newtheorem{lem}{Lemma}
\newtheorem{thm}{Theorem}
\newtheorem{pro}{Proposition}
\newtheorem{notation}{Notation}
\newtheorem{defin}{Definition}
\newtheorem{exa}{Example}
\newcommand{\codefont}[1]{{\fontshape{n}\texttt{#1}}}
\newcommand{\backtick}{$\hspace{4pt}\grave{ }\hspace{2pt}$}
\newcommand{\norm}[1]{\lVert#1\rVert}
\DeclareMathOperator{\Arg}{Arg}
\DeclareMathOperator{\fix}{fix}
\DeclareMathOperator{\Fix}{Fix}
\DeclareMathOperator{\aut}{aut}
\DeclareMathOperator{\Id}{Id}
\newenvironment{snippet}{\vspace{10px}\fontfamily{ppl}\selectfont}{\vspace{10px}}
\let\oldsqrt\sqrt
\def\sqrt{\mathpalette\DHLhksqrt}
\def\DHLhksqrt#1#2{%
\setbox0=\hbox{$#1\oldsqrt{#2\,}$}\dimen0=\ht0
\advance\dimen0-0.2\ht0
\setbox2=\hbox{\vrule height\ht0 depth -\dimen0}%
{\box0\lower0.4pt\box2}}
\begin{document}

\title{Software for enumerative and analytic combinatorics}
\author{Andrew MacFie}
\date{2013}
\maketitle

\begin{abstract}
We survey some general-purpose symbolic software packages that implement
algorithms from enumerative and analytic combinatorics.
Software for the following areas is covered:
basic combinatorial objects,
symbolic combinatorics,
P\'olya theory,
combinatorial species, and
asymptotics.
We describe the capabilities that the packages offer as well as some of the
algorithms used, and provide links to original documentation.
Most of the packages are freely downloadable from the web.
\end{abstract}

Note: In this document, to refer to webpages
we place URL links in footnotes, and for all other types of referent we use standard endnote references.

\tableofcontents

\section{Introduction}
\label{sec:introduction}
In an opinion article\footnote{
 \url{http://www.math.rutgers.edu/~zeilberg/Opinion36.html}
} posted to his website in 1999, Doron Zeilberger challenged mathematicians to rethink the role of computers in mathematics.
``Everything that we can prove today will soon be provable, faster and better, by computers,''
he says, then gives the following advice:
\begin{quote}
The real work of us mathematicians, from now until, roughly, fifty years from now, when computers won't need us anymore, is to make the transition from human-centric math to machine-centric math as smooth and efficient as possible. \dots
 We could be much more useful than we are now, if, instead of proving yet another theorem, we would start teaching the computer everything we know, so that it would have a headstart. \dots
 Once you learned to PROGRAM (rather than just use) Maple (or, if you insist Mathematica, etc.), you should immediately get to the business of transcribing your math-knowledge into Maple.
\end{quote}

If futurist Ray Kurzweil's predictions for artificial intelligence progress\footnote{
\url{http://en.wikipedia.org/wiki/Predictions_made_by_Ray_Kurzweil}
}
are to be believed, Zeilberger's suggestions will turn out to be sound.
(However, utilitarians would urge us to consider the risks such technology would present.\footnote{
 \url{http://singinst.org/research/publications}
}\ \footnote{
 \url{http://www.existential-risk.org/}
})
What is certain even today is that those who use mathematics,
given the rise of computers,
can ask themselves if they are failing to capitalize on a productive division of labor between man and machine.
\hypertarget{started}{It}
 is no longer necessary to spend three years of Sundays factoring the Mersenne number \(2^{67}-1\), like F. N. Cole did in the 1900s \cite{cole},
and neither is it necessary to use error-prone pen and paper methods to perform an ever-growing set of mathematical procedures.

To illustrate this statement, this document examines symbolic computation (a.k.a.\ computer algebra) software packages for enumerative and algebraic combinatorics.
We start, in Section \ref{sec:Enumerativeandanalyticcombinatorics}, with an overview of the fields of enumerative and analytic combinatorics.
Then we go into more detail on the scope of the document in Section \ref{sec:notes}.
In Sections \ref{sec:basiccombinatorialobjects}--\ref{sec:asymptotics} we cover packages relating to
basic combinatorial objects,
symbolic combinatorics,
P\'olya theory,
combinatorial species, and
asymptotics.
Finally, we offer concluding observations and remarks in Section \ref{sec:conc}.

\section{Enumerative and analytic combinatorics}
\label{sec:Enumerativeandanalyticcombinatorics}
Welcome to the fields of enumerative and analytic combinatorics, a.k.a.\ combinatorial and asymptotic enumeration!
In order to completely cover what mathematicians think of when they think of these fields
(and to avoid saying ``enumerative'' or ``combinatorial''),
we break up our discussion into two parts which we call \emph{counting}, the more mathematical side, and \emph{enumeration}, the more algorithmic side.

\subsection{Counting}
Counting, the oldest mathematical subject \cite{companion}, is enumeration in the mathematical sense:  the study of the cardinalities of finite sets.
The most basic principle of counting is the addition rule \cite{feng}:

\begin{pro}[Addition rule]
If \( S \) is a set and \( A_1, A_2, \dots, A_n \) is a partition of \( S \), then
\[ |S| = |A_1| + |A_2| + \cdots + |A_n|. \]
\end{pro}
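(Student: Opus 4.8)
The plan is to proceed by induction on $n$, the number of blocks, with all the conceptual content concentrated in the two-block case. So I would first isolate that case as a lemma: if $A$ and $B$ are disjoint finite sets, then $|A \cup B| = |A| + |B|$. To prove it I would unwind the definition of cardinality. Write $a = |A|$ and $b = |B|$ and fix bijections $f \colon A \to \{1, \dots, a\}$ and $g \colon B \to \{1, \dots, b\}$. Define $h \colon A \cup B \to \{1, \dots, a+b\}$ by $h(x) = f(x)$ for $x \in A$ and $h(x) = a + g(x)$ for $x \in B$; this is well defined precisely because $A \cap B = \emptyset$, and a short check shows it is a bijection (injectivity uses disjointness of the ranges $\{1,\dots,a\}$ and $\{a+1,\dots,a+b\}$; surjectivity is immediate). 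Hence $|A \cup B| = a + b$.

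For the inductive step, suppose the statement holds for every partition into $n-1$ blocks. Given a partition $A_1, \dots, A_n$ of $S$, set $T = A_1 \cup \cdots \cup A_{n-1}$. Then $A_1, \dots, A_{n-1}$ is a partition of $T$, and $T$ and $A_n$ are disjoint, since a common element would lie in some $A_i$ with $i < n$ and simultaneously in $A_n$, contradicting pairwise disjointness. Applying the two-block lemma to $S = T \cup A_n$ gives $|S| = |T| + |A_n|$, and the induction hypothesis gives $|T| = |A_1| + \cdots + |A_{n-1}|$; combining the two yields the claim. The base case $n = 1$ is immediate, since then $S = A_1$.

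I expect no real obstacle here; the only point that needs care is the precise meaning of \emph{partition}. If the blocks are required to be nonempty and $S$ is assumed finite, the argument above is complete. If one wishes to permit empty blocks, one first records that $|\emptyset| = 0$ — the empty function is a bijection $\emptyset \to \{1,\dots,0\} = \emptyset$ — after which any empty block contributes $0$ to the right-hand side and the same induction goes through unchanged.
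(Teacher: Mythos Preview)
Your argument is correct and complete: the two-block lemma via an explicit bijection, followed by induction on the number of blocks, is the standard and entirely rigorous way to establish this. The only remark is that the paper does not actually supply a proof of this proposition; it is stated as ``the most basic principle of counting'' with a reference, so there is no original argument to compare against. Your proof would serve perfectly well as the omitted justification.
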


Other similar basic ways of counting
may be familiar from introductory probability,
and indeed, many concepts overlap between counting and discrete probability.\footnote{
\url{http://planetmath.org/encyclopedia/Combinatorics.html}
}

Elements of the body of work on counting can be roughly categorized based on three criteria:
whether they deal with exact or asymptotic results,
whether they speak in terms of generating functions or their coefficients, and
whether they make use of bijections or manipulations.


\subsubsection{Exact vs.\ asymptotic counting}
\begin{notation}
Boldface symbols refer to (possibly terminating) \(1\)-based sequences, i.e. \( \boldsymbol{a} = (a_1, a_2, \dots) \).%
\end{notation}

Generally, counting problems involve a triple \( (S, \boldsymbol{p}, N) \)
comprising
a countable set \(S\) of objects, a sequence \( \boldsymbol{p} = (p_1, p_2, \dots)  \) of functions \( p_i : S \rightarrow \mathbb{Z}_{\geq 0} \), and
a set \(N\) of sequences \( \boldsymbol{n} \) for which  \( \boldsymbol{p}^{-1}(\boldsymbol{n}) = \{ s \in S : p_1(s) = n_1, p_2(s) = n_2, \dots \} \subseteq S\)
 is finite.
The problem  is to answer the question
``For \(\boldsymbol{n} \in N\), how many objects does the set \( \boldsymbol{p}^{-1}(\boldsymbol{n}) \) contain?''
The definition of an \emph{exact answer} was given by Herb Wilf: a polynomial-time algorithm that computes the number \cite{companion}.

\begin{exa}
\label{exa:signature}
If \( \pi \in \mathcal{S}_n \) is a permutation, then let \( c_j(\pi) \) be the number of cycles of \( \pi \) with size \( j \).
The \emph{signature} of \( \pi \) is \( \boldsymbol{c}(\pi) = (c_1(\pi), c_2(\pi), \dots) \).
Let \( S \) be the set of all permutations, and for \( s \in S\),  let \( \boldsymbol{p}(s) \) be the signature of \( s \).
Let \( n \geq 1 \) and let \( \boldsymbol{n} = (\)\href{http://en.wikipedia.org/wiki/Iverson_bracket}{\([\)}\(j=n])_{j \geq 0} \).
Then \( |\boldsymbol{p}^{-1}(\boldsymbol{n})| = (n-1)! \).
The expression \( (n-1)! \) immediately suggests a polynomial-time algorithm to compute \( |\boldsymbol{p}^{-1}(\boldsymbol{n})| \).
\end{exa}

\sloppy Exact answers can be classified by \emph{ansatz}, meaning the form of the sequence \( (|\boldsymbol{p}^{-1}(\boldsymbol{n})|)_{ \boldsymbol{n} \in N} \), which is generally determined by the ``simplest'' reccurence relation it satisfies \cite{companion}.
\\

Exact answers are not the end of the story, however, partly because of applications to the analysis of algorithms.

In the single-parameter case, i.e.\ \( \boldsymbol{p} = (p) \) and \( N = ((0), (1), (2), \dots ) \), an asymptotic answer is a relation between \(f:n \mapsto | p^{-1}(n) | \) and ``simple'' functions, which holds as \( n \rightarrow \infty \).
Often the ``simple'' functions come from the logarithmico-exponential class \( \mathfrak{L} \) of Hardy \cite{concrete, hardy1},
and the relation is \( f(n) \sim g(n) \) as \( n \rightarrow \infty \), where \( g \in \mathfrak{L} \).
A more substantial relation is a full asymptotic series \cite{debruijn}:
\begin{defin}
\label{def:asymptoticseries}
Given a sequence of functions \( \boldsymbol{g} \) with \( g_{k+1}(n) = o(g_k(n)) \) as \( n \rightarrow \infty \) for all \(k\), and real numbers \( c_1, c_2, \dots \), the statement
\[ f(n) \sim \boldsymbol{c}\cdot \boldsymbol{g}(n) = c_1 g_1(n) + c_2 g_2(n) + c_3 g_3(n) + \cdots \]
is called an \emph{asymptotic series} for \( f \),
and it means
\begin{align*}
f(n) &= O(g_1(n)) \\
f(n) &= c_1 g_1(n) + O(g_2(n)) \\
f(n) &= c_1 g_1(n) + c_2 g_2(n) + O(g_3(n)) \\
f(n) &= c_1 g_1(n) + c_2 g_2(n) + c_3 g_3(n) + O(g_4(n))\\
&\vdots
\end{align*}
(as \( n \rightarrow \infty \).)
\end{defin}

An asymptotic answer in the multiple-parameter case is more complicated;
it generally involves (possibly just some moments of) a continuous approximation to a discrete probability distribution as one parameter approaches infinity,
or an asymptotic relation which holds as one or more parameters approach infinity at various rates.

\subsubsection{Generating functions vs.\ their coefficients}
Given a triple \( (S, \boldsymbol{p}, N) \), let \(f:N \rightarrow \mathbb{Z}_{\geq 0} \) be defined
\( f( \boldsymbol{n} ) = | \boldsymbol{p}^{-1} ( \boldsymbol{n} )| \).
A (type \(u\)) \emph{generating function} of \( S \) with \( \boldsymbol{z} = (z_1, z_2, \dots) \)
\emph{marking} \( \boldsymbol{p} \) is the element of the ring \( \mathbb{Q}[[\boldsymbol{z}]] \)
\[ F( \boldsymbol{z} ) = \sum_{\boldsymbol{n} \succeq \boldsymbol{0}} f(\boldsymbol{n}) u(\boldsymbol{n}) \boldsymbol{z}^{\boldsymbol{n}}, \]
where \( \boldsymbol{z}^{\boldsymbol{n}} = z_1^{n_1} z_2^{n_2} \cdots \).
We call \( F( \boldsymbol{z} ) \) an
\emph{ordinary} generating function iff
 \( u(\boldsymbol{n}) = 1 \) for all \( \boldsymbol{n} \succeq \boldsymbol{0} \),
and we call it an
\emph{exponential} generating function iff
 \( u(\boldsymbol{n}) = (n_1!)^{-1} \) for all \( \boldsymbol{n} \succeq \boldsymbol{0} \).

\begin{exa}
Let \( F( \boldsymbol{z}) \) be the ordinary generating function for words of length \( n \) on the alphabet \([1..k]\), with \( z_j \) marking the number of occurrences of \( j, 1 \leq j \leq k \).
We have \[ F(\boldsymbol{z}) = (z_1 + z_2 + \cdots + z_k)^n. \]
\end{exa}

To define convergence of sequences of generating functions, the norm on formal power series used in this document is defined for \( f( \boldsymbol{z} ) \neq 0 \) as
\[ \norm{f(\boldsymbol{z})} = 2^{-k}, \text{ where } k= \max \left\{ j \in \mathbb{Z}_{\geq 0} : \forall i \in [0..j], [z^i]f(z,z, \dots) = 0 \right\}, \]
and \( \norm{0} = 0 \).

It can be efficient to initially make statements about \( F( \boldsymbol{z} ) \) instead of working directly with \( f \), for a variety of reasons \cite{ac,generatingfunctionology}\footnote{
 \url{http://web.mit.edu/~qchu/Public/TopicsInGF.pdf}
}, and  extracting \( f(\boldsymbol{n}) \) exactly from a suffiently simple representation of \( F( \boldsymbol{z} ) \) can be done in polynomial time \cite{companion}.

In addition, there are very widely applicable theorems for obtaining the asymptotics of \(f\) from \( F(\boldsymbol{z}) \) \cite{ac}, which involve using the power series \( F(\boldsymbol{z}) \) to define a complex function analytic at the origin.
Since generating functions are heavily used in both the exact and asymptotic worlds, Wilf says in his book \cite{generatingfunctionology}, ``To omit the analytical (i.e.\ asymptotic) parts of [counting with generating functions] \dots is like listening to a stereo broadcast of, say, Beethoven's Ninth Symphony, using only the left audio channel.
The full beauty of the subject of generating functions emerges only from tuning in on both channels: the discrete and the continuous.''

\subsubsection{Bijective combinatorics vs.\ manipulatorics}
To prove two sets have equal cardinality, often one of two methods is used.
First, if one has algebraic expressions for the cardinality of each set, one may perform algebraic
manipulations upon them until they are syntactically equivalent.
Second, one may give an explicit bijection between them.
The following example demonstrates each method on the same problem:

\begin{exa}
A partition is called \emph{odd} if all its parts are odd and it is called \emph{distinct} if all its parts are distinct.  Let \(f(n)\) and \( g(n) \)
be the number of odd and distinct partitions of size \( n \) respectively, and let us define the ordinary generating functions \( F(z) = \sum_{n \geq 0} f(n) z^n \)
and \( G(z) = \sum_{n \geq0} g(n) z^n \).  Since
\begin{align*}
G(z) &= \prod_{n \geq 0 } (1 + z^n) \\
&= \prod_{n \geq 0} \frac{1-z^{2n}}{1-z^n} \\
&= \frac{\prod_{n \geq 0} (1-z^{2n}) }{ \prod_{ n \geq 0 } (1-z^{2n}) \prod_{n \geq 0} (1-z^{2n + 1})} \\
&= \prod_{n \geq 0} \frac{1}{1-z^{2n+1}} \\
&= F(z),
\end{align*}
we have \([z^n]G(z) = [z^n]F(z) \), and thus \( f(n) = g(n) \), for all \( n \geq 0 \).
There is also a bijective proof of this fact, due to Glaisher \cite{companion}, which we sketch.
The function from distinct partitions to odd partitions is defined as follows: Given a distinct partition, write each of its parts as \(2^r s\), where \( s \) is odd, and replace each by \(2^r\) copies of \(s\).
This function is invertible, with inverse computable as follows: Take an odd part \( a \) which occurs \( m \) times, and write
\( m \) in base \(2\), i.e.{} \(m = (s_k \cdots s_1)_2 \),
then replace the \( m \) copies of \( a \) by the
\( k \) parts \(2^{s_1}a, \dots, 2^{s_k}a \).
\end{exa}

Arguably, manipulations are not combinatorics, hence the name ``manipulatorics''.
Indeed,  here the borders between algebra, analysis, combinatorics, the analysis of algorithms, and other fields  are blurry.

Usually, bijections are used to give exact answers, but bijections can also be used in the context of asymptotics, as described in \cite{ab}, for example.\\

Standard textbooks for the field of counting include \cite{speciesbook, comtet, ac, stanley1, stanley2}.

\subsection{Enumeration}
\label{sec:enumeration}
Enumeration is the field of computer science dealing with
\emph{algorithms} that \emph{generate}
the elements of finite sets.
Various types of enumeration problems can be proposed for a given triple \( (S, \boldsymbol{p}, N) \);
the most common ones are ranking and unranking, random generation, exhaustive listing, and iteration,
which we define in that order below.

We call an \emph{ordering} of \( \boldsymbol{p}^{-1}(\boldsymbol{n}) \subseteq S \) a bijection between \( \boldsymbol{p}^{-1}(\boldsymbol{n}) \) and the integers \( [1..|\boldsymbol{p}^{-1}(\boldsymbol{n})| ] \).
A \emph{ranking} algorithm computes this bijection and an \emph{unranking} algorithm computes its inverse.

In \emph{random generation}, discrete distributions (always uniform distributions in this document) are specified on the sets \( \boldsymbol{p}^{-1}(\boldsymbol{n}) \),
and an algorithm is required to take as input \( \boldsymbol{n} \in N \) and return a random variate drawn from \( \boldsymbol{p}^{-1}(\boldsymbol{n}) \) according to the distribution.
Unranking algorithms can be used for random generation, since an integer can be generated at random and then unranked to give a random object.

An \emph{exhaustive listing} algorithm takes as input \( \boldsymbol{n} \in N \) and returns a list of all elements in
\( \boldsymbol{p}^{-1}(\boldsymbol{n}) \).
Generally, as with random generation, if the \(  \boldsymbol{p}^{-1}(\boldsymbol{n}) \) are well defined, a brute force algorithm is trivial, and the problem
lies in designing an efficient algorithm.

\emph{Iteration} is the problem of, given \( \boldsymbol{n} \in N\) and an object \(s \in \boldsymbol{p}^{-1}(\boldsymbol{n}) \), generating the next object \(s'\) in a certain ordering of \( \boldsymbol{p}^{-1}(\boldsymbol{n}) \).
It is related to the problem of exhaustive listing since any iteration algorithm immediately leads to an exhaustive listing algorithm, and more importantly, finding a particular ordering of \( \boldsymbol{p}^{-1}(\boldsymbol{n}) \)  often leads to the most efficient exhaustive listing algorithms.
If one employs an iteration algorithm repeatedly,
in an exhaustive listing algorithm,
one aims for an ordering that takes constant amortized time for each iteration.
A \emph{Gray code} is an ordering in which sucessive objects differ in some prespecified small way and thus is perfect for exhaustive listing through iteration.

\section{A few notes on the following packages}
\label{sec:notes}
In this document we focus on general-purpose software of wide interest to mathematicians, mathematics students, and perhaps those outside the field.
Many packages have been created for solving particular counting and enumeration problems, such as Lara Pudwell's enumeration schemes packages\footnote{
 \url{http://faculty.valpo.edu/lpudwell/maple.html}
}, Donald Knuth's OBDD\footnote{
 \url{http://www-cs-staff.stanford.edu/~knuth/programs.html}
} which enumerates perfect matchings of bipartite graphs, and many of Zeilberger's numerous Maple packages\footnote{
 \url{http://www.math.rutgers.edu/~zeilberg/programs.html}
};
such packages are outside the scope of this document.

There are some relatively general-purpose packages which did not make it into the document but should to be mentioned for completeness, though:
The Mathematica package Omega implements MacMahon's Partition Analysis\footnote{
 \url{http://www.risc.jku.at/research/combinat/software/Omega/index.php}
},
and there is a Maple analog written by Zeilberger called LinDiophantus\footnote{
 \url{http://www.math.rutgers.edu/~zeilberg/tokhniot/LinDiophantus}
};
an algorithm by Guo-Niu Han able to cover more general expressions than Omega was implemented in Maple\footnote{
 \url{http://www-irma.u-strasbg.fr/~guoniu/software/omega.html}
} and Mathematica\footnote{
 \url{http://www.risc.jku.at/research/combinat/software/GenOmega/index.php}
} packages;
the package RLangGFun translates from rational generating functions to regular languages\footnote{
 \url{http://www.risc.jku.at/research/combinat/software/RLangGFun/index.php}
},
and the regexpcount package from the INRIA Algorithms Group translates in the other direction\footnote{
 \url{http://algo.inria.fr/libraries/libraries.html\#regexpcount}
};
and Zeilberger has written a number of packages related to the umbral transfer matrix method, an infinite-matrix generalization of the transfer matrix method.\footnote{
 \url{http://www.math.rutgers.edu/~zeilberg/programs.html}
}

This document excludes some software related to the intersection of algebraic and enumerative combinatorics, and all software related to
power series summation and manipulation.

General-purpose software is currently skewed towards manipulations, away from bijections.
Indeed, one would not expect complicated bijections such as the proofs of
Theorem 1 in \cite{triangles} (balanced trees and rooted triangulations)
or
Theorem 4.34 in \cite{erratabook} (indecomposable 1342-avoiding \(n\)-permutations and \( \beta(0,1) \)-trees on \(n\) vertices)
to be obtainable by symbolic methods any time soon.
However, in this document, we include as many algorithms with a bijective flavor as possible.
(Ultimately all computations may be considered manipulations, but we refer to a qualitative difference in mathematical content.)

Finally, we note that plenty of basic algorithms from combinatorics, as well as advanced algorithms from the symbolic computation literature,
have not been implemented in a published package.
(Actually, whether or not an algorithm has been ``implemented in a published package'' has a fuzzy value.
For example, some have been implemented, published, but are now gone,
while others have been published and are available, but are written in obscure languages that are unfamiliar or difficult to obtain compilers for.
The sentence is true even in the loosest sense, however.)

\section{Basic combinatorial objects}
\label{sec:basiccombinatorialobjects}
\subsection{Mathematical background}
Algorithmically counting and enumerating the basic objects of combinatorics like graphs, trees, set partitions, integer partitions, integer compositions, subsets, and permutations is implemented in various packages and discussed in various books \cite{IVA, combinatorialalgorithms, NW78, skiena, W89} (and many papers).
As an example, \cite{SED77} describes over 30 permutation generation algorithms published as of 1977.

A complete comparison of the enumeration algorithms implemented in the packages in this document could be its own project.
For the packages mentioned in this document that enumerate basic objects, we do not give full details on the algorithms  used.
For more information, see the packages' documentation.
However, in the rest of this subsection we provide some examples of
two general concepts, first mentioned in Section \ref{sec:enumeration},
which guide the discovery of
enumeration algorithms: orderings and Gray codes.

\subsubsection{Orderings}
Many combinatorial objects can be ordered lexicographically.
\emph{Lexicographic order}, a.k.a.\ lex order, applies when the objects can be represented as words over an ordered alphabet.\footnote{
 \url{http://planetmath.org/encyclopedia/DictionaryOrder.html}
}
If \(w = w_1 w_2 \cdots w_n \) and \(v = v_1 v_2 \cdots v_n \) are words then, in lexicographic order, then \(w < v \) iff \(w_1 < v_1\) or there is some \( k \in [1..n-1] \) such that \(w_j = v_j \) for \( 1 \leq j \leq k \) and \( w_{k+1} < v_{k+1} \).
Permutations are a clear example of a case where this order applies, and
iterating through permutations in lexicographic order is an easy exercise, see \cite{dijkstra} for a solution.

\emph{Co-lexicographic order}, a.k.a. co-lex order, is related: If \(w\) and \(w'\) are words, \(w \leq w'\) in co-lexicographic order iff \(rev(w) \leq rev(w')\) in lexicographic order (where \(rev\) reverses words).

Another order, \emph{cool-lex order}, applies to binary words containing exactly \( k \) copies of \(1\), which we can think of as \(k\)-subsets \cite{coollex}.
Generating the next binary word in cool-lex order is done as follows: Find the shortest prefix ending in \(010\) or \(011\), or the entire word if no such prefix exists.
Then cyclically shift it one position to the right.
Since the shifted portion of the string consists of at most four contiguous runs of \(0\)'s and \(1\)'s, each succesive binary word can be generated by transposing only one or two pairs of bits.
Thus cool-lex order for \(k\)-subsets is a Gray code.

\subsubsection{Gray codes}
Unrestricted subsets have a Gray code that is very easy to understand, called the \emph{standard reflected Gray code},
in which, as above, we represent subsets as binary words.
Say we want to construct a Gray code \(G_n\) of subsets of a size-\(n\) set,
and suppose we already have a Gray code \(G_{n-1}\) of subsets of the last \(n-1\) elements of the set.
Concatenate \(G_{n-1}\) with a reversed copy of \(G_{n-1}\) with the first element of the set added to each subset.
Then all subsets differ by one from their neighbors, including the center, where the subsets are identical except for the first element.

\subsection{Combinat (Maple)}
Author: Maplesoft
\\
Website: \url{http://www.maplesoft.com/support/help/Maple/view.aspx?path=combinat}
\\

The combinat package, which is distributed with Maple, has routines for counting, listing, randomly generating, and ranking and unranking basic combinatorial objects such as permutations, \(k\)-subsets,
unrestricted subsets, integer partitions, set partitions, and integer compositions.\footnote{
 \url{http://www.maplesoft.com/support/help/Maple/view.aspx?path=combinat}
}
Like Combinatorica and unlike Sage and the Combinatorial Object Server, combinat does not offer a wide range of restrictions that can be placed on the objects.
As mentioned in Section \ref{sec:combstruct},
most of the functionality of the combinat package
is also covered by Combstruct.

Most types of objects can only be enumerated in a single ordering, but unrestricted subsets (in binary word form) can be listed in Gray code order with the
\codefont{graycode}
function.
\begin{exa}
We can use \codefont{graycode} to print all subsets of a size-\(3\) set:

\begin{snippet}
\codefont{> printf(cat(\backtick \ \%.3d\backtick \$8), op(map(convert, graycode(3), binary)))}

\codefont{000 001 011 010 110 111 101 100}
\end{snippet}
\end{exa}

We note that outside the combinat package, Maple includes support for random graph generation, which is comparable to, for example, Mathematica's.
For more information on Maple, see the Appendix.

\subsection{Combinatorial Object Server}

Author: Frank Ruskey
\\
Last modified: May 2011
\\
Website: \url{http://theory.cs.uvic.ca/cos.html}
\\

The Combinatorial Object Server (COS) is a website that runs on the University of Victoria's domain.
It has a web interface for easily specifying a set of basic combinatorial objects and viewing an exhaustive listing of all objects in the set (see Figure \ref{fig:cos} on page \pageref{fig:cos}).
Objects available include permutations, derangements, involutions, \(k\)-subsets, unrestricted subsets, set partitions, trees, necklaces, and unlabeled graphs.

On each type of object, there is a set of restrictions that can be placed.
Integer partitions, for example, can be restricted by largest part, and whether the parts must be
odd,
distinct, or odd and distinct.


There is also a wide variety of output formats for the objects.
Permutations, for example, can be printed in one line notation, cycle notation, permutation matrix form, standard Young tableau form and more.

The order of output can sometimes be specified, too.
Combinations, for example, can be shown in Gray code, lexicographic, co-lexicographic, cool-lex, transposition, or adjacent transposition orders.

\begin{figure}
\setlength\fboxsep{0pt}
\setlength\fboxrule{0.5pt}
\centering  \fbox{ \includegraphics{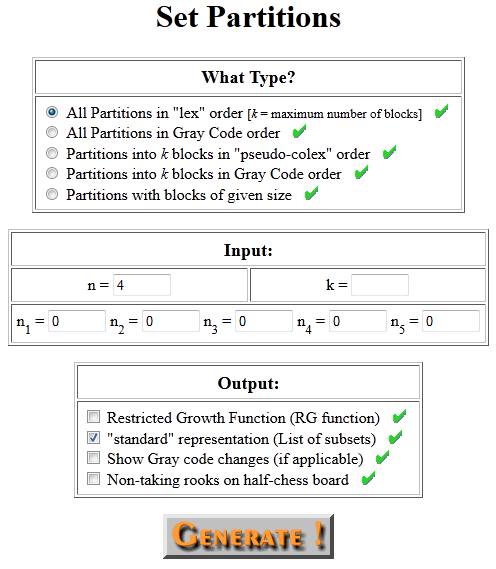} }
\caption{The Combinatorial Object Server's page for set partitions. \label{fig:cos}}
\end{figure}

\subsection{Combinatorica: basic combinatorial objects}
\label{sec:combinatoricabasicobjects}
Authors: Sriram Pemmaraju and Steven Skiena
\\
Download: \url{http://www.cs.uiowa.edu/~sriram/Combinatorica/NewCombinatorica.m}
\\
Last modified: 2006
\\
Website: \url{http://www.cs.sunysb.edu/~skiena/combinatorica/}
\\

Combinatorica is a Mathematica package for discrete mathematics.
In development since 1990, it includes over 450 functions in the areas of P\'olya theory, permutations and algebraic combinatorics, basic combinatorial objects, graph algorithms, and graph plotting.
A book was written by the package authors \cite{skiena}, which is the definitive source of information on Combinatorica.
The Combinatorica package has been included with releases of Mathematica since Mathematica version 4.2, although some of Combinatorica's functionality has recently been redone and built into the Mathematica kernel in Mathematica 8.
For information on Mathematica as a programming language, see the Appendix.\\

Combinatorica has support for counting and enumeration with permutations, \(k\)-subsets, unrestricted subsets, integer partitions, integer compositions, set partitions, Young tableaus and graphs.
For each type of object, Combinatorica generally offers rules for
counting, iteration and listing in one or two orderings,
and random generation.
Combinatorica does not provide as many ways to specify restrictions on the objects as COS or Sage.

\begin{exa}
The function \codefont{GrayCodeSubsets} exhaustively lists all subsets of a set in standard reflected Gray code order:

\begin{snippet}
\codefont{In[1]:= GrayCodeSubsets[\{1, 2, 3, 4\}]}

\codefont{Out[1]:= \{\{\},\{4\},\{3,4\},\{3\},\{2,3\},\{2,3,4\},\{2,4\},\{2\},\{1,2\},\{1,2,4\},}\\
\codefont{\{1,2,3,4\},\{1,2,3\},\{1,3\},\{1,3,4\},\{1,4\},\{1\}\}}
\end{snippet}

One may wonder if such a Gray code is unique,
and one can find this out by first noticing that Gray codes for the subsets of a size-\(n\) set are in bijection with Hamiltonian paths in the \(n\)-dimensional hypercube.
Combinatorica includes a database of common graphs, including \codefont{Hypercube[n]}, and also has the \codefont{HamiltonianCycle} rule which replaces \codefont{HamiltonianCycle[graph, All]} with a list of all Hamiltonian cycles in \codefont{graph}.
So to find out if the Gray code order above is unique, one can find the length of the list of Hamiltonian paths in the \(4\)-dimensional hypercube:\\

\begin{snippet}
\codefont{In[2]:= Length[HamiltonianCycle[Hypercube[4], All]]}

\codefont{Out[2]:= 2688}
\end{snippet}

It is definitely not!
The number of Hamiltonian cycles in an \(n\)-dimensional hypercube is not known, even asymptotically \cite{W89}.
\end{exa}

\subsection{Sage: basic combinatorial objects}

Download: \url{http://www.sagemath.org/download.html}
\\
Online access: \url{http://www.sagenb.org/}
\\
Website: \url{http://sagemath.org/doc/reference/combinat/index.html}
\\

Sage is a free, open-source computer algebra system (CAS) first released in 2005.
Sage integrates many specialized open-source symbolic and numeric packages, such as Maxima, GAP, SciPy, and NumPy, written in various languages and allows them all to be called from a unified Python interface.
In addition, it has native support for a wide and quickly expanding range of mathematical fields, including combinatorics.

This section covers Sage's capabilities for counting and enumerating basic combinatorial objects;
see
Section \ref{sec:sagespecies}
for Sage's combinatorial species capabilities.
\\

Sage uses object-oriented programming to implement a category-theoretic hierarchy of categories and objects.\footnote{
 \url{http://www.sagemath.org/doc/reference/sage/categories/category.html}
}\ \footnote{
 \url{http://www.sagemath.org/doc/reference/sage/categories/primer.html}
}
Sage's support for combinatorial objects, which is part of a migration of the MuPAD-Combinat project\footnote{
 \url{http://mupad-combinat.sourceforge.net/}
}, which has reached end-of-life, to Sage, is based on the category called \codefont{EnumeratedSets}\footnote{
 \url{http://www.sagemath.org/doc/reference/sage/categories/enumerated\_sets.html}
}.
Classes of basic combinatorial structures
(such as \(k\)-subsets, unrestricted subsets, signed and unsigned integer compositions, necklaces,
integer partitions, permutations, ordered and unordered set partitions, words and subwords)
all belong to the category \codefont{EnumeratedSets} which implies that
sets of objects from those categories can be constructed which
inherit at least the following methods:
\begin{enumerate}
 \item \codefont{cardinality()} - the cardinality of the set,
 \item \codefont{list()} - a list of all elements,
 \item \codefont{unrank(n)} - the \codefont{n}th object in an ordering,
 \item \codefont{rank(e)} - the rank of the object \codefont{e},
 \item \codefont{first()} - the first object in the ordering,
 \item \codefont{next(e)} - the next object after \codefont{e} in an ordering,
 \item \codefont{random\_element()} - an object chosen at random according to the uniform distribution.
\end{enumerate}

Of course, each class of combinatorial object built in to the system may also implement many more methods.
Many classes allow restrictions to be specified, but only the default ordering is available.





\begin{exa}
The \codefont{Partitions()} static method is called to construct an object representing a set of integer partitions specified by its arguments.\footnote{
 \url{http://sagemath.org/doc/reference/sage/combinat/partition.html}
}
For example, \codefont{Partitions(4)} returns all integer partitions of \( 4 \), while \codefont{Partitions(4, max\_part=2)} returns all partitions of \(4\) with maximum part size \(2\):

\begin{snippet}
\codefont{Partitions(4, max\_part=2).cardinality()}

\codefont{3}

\codefont{sage: Partitions(4, max\_part=2).list()}

\codefont{[[2, 2], [2, 1, 1], [1, 1, 1, 1]]}

\codefont{sage: Partitions(4, max\_part=2).random\_element()}

\codefont{[2,2] }
\end{snippet}
\end{exa}

Sage also provides several implementations of counting functions, separate from the \codefont{EnumeratedSets} category.\footnote{
 \url{http://sagemath.org/doc/reference/sage/combinat/combinat.html}
}
These include the partition-theoretic counting functions for number of set partitions, and ordered and unordered integer partitions,
and the set-theoretic counting functions for number of subsets, arrangements, derangements and permutations of a multiset.

\section{Symbolic combinatorics}
\label{sec:symboliccombinatorics}
\subsection{Mathematical background}
\label{sec:symboliccombinatoricsbackground}
Let \(S\) be a set of objects, with a parameter \(\boldsymbol{p}=(p) \).
We define a new set \(S^{<2>} = S \times S\), with parameter \( \boldsymbol{p}^{<2>} = (p^{<2>}),\) where \( p^{<2>}((s_1, s_2)) = p(s_1) + p(s_2) \) for all \( s_1, s_2 \in S\).
Let \(f(n) = |p^{-1}(n)|, f^{<2>}(n) = |(p^{<2>})^{-1}(n)|\).
Then if  \(F(z)\) is the ordinary generating function
\[ F(z) = \sum_{n \geq 0} f(n) z^n, \]
and
\( F^{<2>}(z) \) is the ordinary generating function
\[ F^{<2>}(z) = \sum_{n \geq 0} f^{<2>}(n) z^n,\]
we have, simply,
\[ F^{<2>}(z) = F(z)^2. \]
It turns out that many other correspondences exist between the structure of a set of objects and its generating function.
This document includes sections for two frameworks that develop this idea:
\emph{the theory of combinatorial species} which is the focus of Section \ref{sec:species},
and
\emph{symbolic combinatorics},
which is described below.

The central concept of symbolic combinatorics\footnote{
 \url{http://en.wikipedia.org/wiki/Symbolic\_combinatorics}
} is the combinatorial class.
\begin{defin}
A \emph{combinatorial class} is a countable set on which a parameter called \emph{size} is defined, such that the number of elements of any given size \(n \geq 0 \) is finite.
\end{defin}
If \(\mathcal{A}\) is a combinatorial class, the size of an element \(\alpha \in \mathcal{A} \) is denoted \( |\alpha| \).
We denote the set of elements of size \(n\) in \(\mathcal{A}\) by \(\mathcal{A}_n\), and denote its cardinality by \( a_n = |\mathcal{A}_n| \).
\begin{defin}
The \emph{counting sequence} of a combinatorial class \(\mathcal{A}\) is the sequence \( (a_n)_{n\geq0} \).
\end{defin}

There are two types of combinatorial class, \emph{unlabeled} and \emph{labeled}.

\subsubsection{Unlabeled classes}
The word \emph{class} in this section refers to an unlabeled combinatorial class, which can be thought of as a set of objects made up of nodes without unique labels (think graphs).
This will become rigorous as we proceed.

\begin{defin}
The (ordinary) \emph{generating function} of a class \(\mathcal{A}\) with \(z\) marking size is the formal power series
\[ A(z) = \sum_{n \geq 0} a_n z^n.\]
\end{defin}
\begin{defin}
A (\(k\)-ary) \emph{combinatorial construction} \(\Phi\) is a function that maps combinatorial classes \(\mathcal{B}^{<1>}, \mathcal{B}^{<2>}, \dots, \mathcal{B}^{<k>}\) to a new class \(\mathcal{A} = \Phi(\mathcal{B}^{<1>}, \mathcal{B}^{<2>}, \dots, \mathcal{B}^{<k>}) \).

The combinatorial construction \( \Phi \) is \emph{admissible} iff the counting sequence of \( \mathcal{A} \) only depends on the counting sequences of the arguments \( \mathcal{B}^{<1>}, \mathcal{B}^{<2>}, \dots, \mathcal{B}^{<k>} \).
\end{defin}

If a construction \( \Phi \) is admissible, there exists a corresponding operator \(\Psi\) on generating functions such that if
\(\mathcal{A} = \Phi(\mathcal{B}^{<1>}, \mathcal{B}^{<2>}, \dots, \mathcal{B}^{<k>}), \)
 then
\[A(z) = \Psi(B^{<1>}(z), B^{<2>}(z), \dots, B^{<k>}(z) ). \]

The basic admissible constructions for unlabeled classes are called sum, product, sequence, powerset, multiset, and cycle.
The definitions and corresponding generating function operators for all of these can be found in \cite{ac}; here we only describe the first three.

\begin{enumerate}
\item The \emph{sum} of two classes \(\mathcal{A}\) and \(\mathcal{B}\) is written \( \mathcal{A} + \mathcal{B} \) and is formed by the discriminated union\footnote{
 \url{http://en.wikipedia.org/wiki/Disjoint\_union}
}
of \( \mathcal{A} \) and \( \mathcal{B} \), with size inherited from the summands.
The generating function of \( \mathcal{A} + \mathcal{B} \) is \(A(z) + B(z) \).

\item The \emph{product} of two classes is written \( \mathcal{A} \times \mathcal{B} \) and is formed by the cartesian product of \( \mathcal{A} \) and \( \mathcal{B} \), with size defined additively.  This is the construction used \hyperref[sec:symboliccombinatoricsbackground]{above}, where we saw that the generating function for \( \mathcal{A} \times \mathcal{B} \) is \(A(z)B(z)\).

\item Finally, the \emph{sequence} construction \( \textsc{Seq} \) is defined on classes with no elements of size \(0\).
For a class \( \mathcal{A} \), the value \( \textsc{Seq}( \mathcal{A}) \) is the set of all finite sequences of elements in \( \mathcal{A} \), with size defined additively.  The generating function for  \( \textsc{Seq}( \mathcal{A}) \)  is
\[ 1 + A(z) + A(z)^2 + \cdots = \frac{1}{1-A(z)}. \]
\end{enumerate}

The basic combinatorial constructions can be modified (\emph{restricted}) in a number of ways.
For example, we can fix \(k\) and define a construction \( \textsc{Seq}_{ \geq k} \) that constructs sequences of length at least \(k\),
or for another example we could construct products containing an element of even size from one class and an element of odd size from another, etc.

Let \( \mathcal{E} \) be the class with a single element \(\epsilon\) of size \(0 \),
called the \emph{neutral object},
and let \(\mathcal{Z}\) be the class with a single element \( \zeta \) of size \(1\),
called an \emph{atom}.

\begin{defin}
A \emph{specification} for an \(r\)-tuple of classes is a collection of \(r\) equations
\begin{align*}
\mathcal{A}^{<1>} &= \Phi_1(\mathcal{A}^{<1>}, \dots, \mathcal{A}^{<r>}) \\
\mathcal{A}^{<2>} &= \Phi_2(\mathcal{A}^{<1>}, \dots, \mathcal{A}^{<r>}) \\
&\cdots \\
\mathcal{A}^{<r>} &= \Phi_r(\mathcal{A}^{<1>}, \dots, \mathcal{A}^{<r>}),
\end{align*}
where each \(\Phi_i(\cdots)\) represents an expression built from the \(\mathcal{A}\)'s using the (possibly restricted) basic admissible constructions, as well as the classes \(\mathcal{E}\) and \(\mathcal{Z}\).
\end{defin}

\begin{exa}
Let \( \mathcal{T} \) be the class of nonempty unlabeled plane trees, with the size of a tree being the number of nodes.
Then \( \mathcal{T} \) satisfies the specification
\[ \mathcal{T} =  \mathcal{Z} \times \textsc{Seq}(\mathcal{T}), \]
since an object in \( \mathcal{T} \) is a single root with a sequence of subtrees.
This specification implies \(T(z) = z (1-T(z))^{-1} \), and thus \( T(z) = \frac{1}{2} \left(1-\sqrt{1-4 z}\right) \).
\end{exa}

\subsubsection{Labeled classes}
An object in a labeled class is \emph{labeled}, meaning, if it has size \( n \),
each of its \( n \) indivisible components is labeled with a unique integer from the set \( [1..n] \).
A rigorous way to define such classes begins with a different definition for the elementary classes:
Again, let \( \mathcal{E} \) be the class with one neutral object \(\epsilon\) of size \(0\),
but now let \( \mathcal{Z} \) be the class containing one labeled element \( \zeta \) of size \(1\), a \emph{labeled atom}.
Then,
labeled classes can be defined by specifications as above if we define some useful constructions on labeled classes.

There is indeed a set of admissible constructions for labeled classes that is analogous to that for the unlabeled case.
The sum of two classes is defined the same as for unlabeled classes, but for the others, we first need to define the product of two labeled objects:
\begin{defin}
Given two labeled objects, \( \alpha \) and \( \beta \), the \emph{labeled product} \( \alpha \star \beta \) is the set of all pairs \( ( \alpha', \beta' ) \) where \(\alpha' \) and \( \beta' \) are relabeled versions of \(\alpha\) and \( \beta \) such that order is perserved in the relabelings and each number in \([1..|\alpha| + |\beta|]\) appears as a label in either \(\alpha'\) or \( \beta' \).
\end{defin}
This concept leads to the definition of the product, sequence, set and cycle constructions, the first of which we define here; the rest can be found in \cite{ac}.
The product of labeled classes \(\mathcal{A} \) and \( \mathcal{B} \) is the set
\[ \mathcal{A} \star \mathcal{B} = \bigcup_{\alpha \in A, \beta \in B} (\alpha \star \beta), \]
with size defined additively.

As with the unlabeled case, the usefulness of defining a labeled class in terms of a specification comes from the fact that labeled constructions correspond to relatively ``simple'' operators on generating functions --- exponenential generating functions in the labeled case.
\begin{defin}
The (exponential) \emph{generating fuction} of a labeled class \(\mathcal{A}\) with \(z\) marking size is the formal power series
\[ A(z) = \sum_{n \geq 0} a_n \frac{z^n}{n!}. \]
\end{defin}
\noindent For example, the generating function for \(\mathcal{A} \star \mathcal{B} \) is \( A(z)B(z) \).

\begin{exa}
We define a labeled binary tree as a labeled tree in which every internal node has two children.
Let \( \mathcal{B} \) be the labeled class representing such trees.
Since an object in \( \mathcal{B} \) is either a node with no children or a node with two children, we have
\[ \mathcal{B} = \mathcal{Z} + \mathcal{Z} \star \mathcal{B} \star \mathcal{B}, \]
which implies that \( B(z) = z + z B(z)^2 \), and thus \(B(z) = (1-\sqrt{1-4 z^2})/2 z \).
\end{exa}

\subsubsection{Multiple parameters}
Unlabeled and labeled classes can be augmented with parameters other than size.
For example, if there is one more parameter \( \varphi \), we can redefine the original elementary classes so that \( \mathcal{E} = \{ \epsilon \} \), where \( |\epsilon| = \varphi(\epsilon) = 0 \), and \( \mathcal{Z} = \{ \zeta \} \), where \( |\zeta| = 1, \varphi(\zeta)=0 \), and define a new \( \varphi \)-\hypertarget{atomic}{atomic}
 class \( \mathcal{P} = \{ \pi \} \), where \( |\pi| = 0, \varphi(\pi) = 1 \).
Size is the only parameter with respect to which a structure is ``labeled'' or ``unlabeled'' (and the only parameter which may be marked by a variable with a factorial below it in the generating function for the class), so those words can still be used unambiguously to refer to a class with more than one parameter.

All constructions for labeled and unlabeled classes discussed so far have defined size additively, e.g. the size of an object is the sum of the sizes of its components.
All of these constructions can be defined on multi-parameter classes with non-size parameters defined additively, just like size.
The generating function equations they correspond to are the same as the single-parameter ones, except the generating functions may be multivariate (with a different variable marking each parameter).

\begin{exa}
	Let \( \mathcal{A} \) be a single-parameter unlabeled combinatorial class.
	One can define a new class, \( \mathcal{B} \), consisting of sequences of elements of \( \mathcal{A} \), with size and an additional parameter \( \varphi \) such that \( \varphi( \beta ) \) is the number of elements of \( \mathcal{A} \) in \( \beta \), for all \( \beta \in \mathcal{B} \).
	Then \( \mathcal{B} = \textsc{Seq}(\mathcal{P} \times \mathcal{A} ) \), and, if \(u\) marks \( \varphi \), \( A(z,u) = ( 1 - u B(z) )^{-1} \).
\end{exa}

For more details, and for information on constructions where additional parameters are not defined additively, see \cite{ac}.
\\

Given a labeled or unlabeled class \( \mathcal{A} \) with additional parameter \( \varphi \), one can define a random variable \( X_n \) to be an object chosen at random from \( \mathcal{A}_n \) according to the uniform distribution.
If \( u \) marks \( \varphi \) in the bivariate generating function \(A(z, u) \), then we have the syntactically simple relation
\[
	\mathbf{E}[\varphi(X_n)] = \frac{[z^n] A^{(0,1)}(z,u)|_{u=1}}{[z^n] A(z,1)}.
\]

Higher factorial moments are obtained similarly.
Techniques for obtaining limiting distributions from multivariate generating functions also exist: see \cite{ac}.
However, note that these methods for obtaining probabilistic facts apply for any multivariate generating function, whether or not it was obtained with symbolic combinatorics.

\subsection{Combstruct}
\label{sec:combstruct}

Authors: INRIA Algorithms Group\footnote{
 \url{http://algo.inria.fr/index.html}
} and Maplesoft
\\
Website:\\
\indent \url{http://www.maplesoft.com/support/help/Maple/view.aspx?path=combstruct}
\\

Combstruct is a Maple package
originally developed by the INRIA Algorithms Group
which is now distributed with the most recently released version of Maple, Maple 16.
Its functionality has changed over time, but
today, it includes the capabilities of ALAS from LUO (Section \ref{sec:alas}) along with the ability to enumerate, both randomly and exhaustively, the objects of a given size from a specified combinatorial class.
Combstruct also extends ALAS by supporting translation from multiple-parameter specifications to mulivariate generating functions.
Combstruct can also count, randomly and exhaustively enumerate, and iterate through
a small set of built-in, predefined structures.
In fact, Combstruct provides most of the functionality of Maple's combinat\footnote{
 \url{http://www.maplesoft.com/support/help/Maple/view.aspx?path=combinat}
} package for working with basic combinatorial structures, but with syntax that is unified with that for working with classes of objects defined by the user by combinatorial specifications.
We elaborate on these areas of functionality in the rest of this section.

Combstruct allows the user to create labeled or unlabeled single-parameter combinatorial specifications, and augment the specifications with additional parameters separately.
To specify a class, the elementary classes \( \mathcal{E} \) and \( \mathcal{Z} \) can be used, as well as the constructions
sum, product, set, powerset, sequence, cycle and substitution.
The constructions set, powerset, sequence and cycle can be restricted by an inequality or equality relation on the number of components allowed in the objects.
It is possible, of course, to define a great many types of basic combinatorial object, as well as more complicated objects, with such specifications.

The \codefont{gfeqns} command returns the system of equations over generating functions corresponding to a well-defined single-parameter specification (see Definition \ref{defin:welldefined}), and \codefont{gfsolve} attempts to return explicit expressions for the generating functions.
The \codefont{gfseries} command returns the initial values of the counting sequences of the classes.

\begin{exa}

We create a specification \codefont{bintreespec} for labeled binary trees, where \codefont{B} is the class of labeled binary trees and \codefont{Z} is the atomic class:

\begin{snippet}
\codefont{>bintreespec := \{B=Union(Z, Prod(Z, B, B)), Z=Atom \}:}
\end{snippet}

(The \codefont{Union} construction is the same as sum.)
Then we use \codefont{gfsolve} to get the generation functions:

\begin{snippet}
\codefont{>gfsolve(bintreespec, labeled, z)}

\codefont{{B(z) = -(1/2)*(-1+sqrt(1-4*z\^{}2))/z, Z(z) = z}}
\end{snippet}

\end{exa}

Combstruct's \codefont{draw} command takes a single-parameter specification, the name of a class \( \mathcal{A} \), and an integer \( n \geq 0 \) and returns a object chosen at random from the set \( \mathcal{A}_n \) according to the uniform distribution.
For information on the algorithms used, see the documentation.

To add additional parameters to a single-parameter specification, Combstruct allows the user to use an attribute grammar (see the documentation and \cite{attributegrammars} for more information on attribute grammars; also, note that it is also possible to augment single-parameter specifications in Combstruct using a more limited method
based on defining and using new atomic classes as described in the \hyperlink{atomic}{Mathematical background}).
The \codefont{agfeqns} command returns the system of equations over multivariate generating functions corresponding to a  given specification and attribute grammar.
The \codefont{agfseries} command returns the initial values of the multidimensional sequences.
The \codefont{agfmomentsolve} command takes an integer \(k \geq 0 \) and a set of equations over \( n \) multivariate generating functions \(A^{<i>}(z, \boldsymbol{u}), 1 \leq i \leq n \) and attempts to return explicit expressions for \( D_{\boldsymbol{u}}^k A^{<i>}(z, \boldsymbol{u})|_{\boldsymbol{u} = \boldsymbol{1}}, 1 \leq i \leq n\).

A number of in-depth examples of Combstruct in action are available at the INRIA Algorithms Group's website\footnote{
 \url{http://algo.inria.fr/libraries/autocomb/}
}.

\subsection{Encyclopedia of Combinatorial Structures}
\label{sec:encyclopedia}

Authors: Fr\'ed\'eric Chyzak, Alexis Darrasse, and St\'ephanie Petit
\\
Download: \url{http://algo.inria.fr/libraries/\#down}
\\
Last modified: July 2000 --- original version\footnote{
 \url{http://algo.inria.fr/salvy/index.html}
},
2011 --- online version\footnote{
 \url{http://algo.inria.fr/encyclopedia/intro.html}
}
\\

The Encyclopedia of Combinatorial Structures started out as a Maple package written by St\'ephanie Petit as part of the INRIA Algorithm Group's algolib and in 2009 a web interface for it was created by Alexis Darrasse and Fr\'ed\'eric Chyzak at \url{http://algo.inria.fr/encyclopedia/intro.html}.

The Encyclopedia is a database of counting sequences of specifiable combinatorial structures.
For each sequence in the database, the following fields, if available, are either computed or stored:
\begin{enumerate}
\item Name
\item Combinatorial specification, in \hyperref[sec:combstruct]{Combstruct} syntax
\item Initial values, obtained with Combstruct's \codefont{count}
\item Generating function, obtained with Combstruct's gfsolve
\item A linear reccurrence relation, if applicable, obtained with gfun's \codefont{holexprtodiffeq} and \codefont{diffeqtorec}
\item \sloppy A closed-form expression, obtained with either Maple's \codefont{rsolve} or gfun's \codefont{ratpolytocoeff}
\item Dominant asymptotic term as \( n \rightarrow \infty \) computed by \hyperref[sec:gdev]{gdev}'s \codefont{equivalent}
\item Description of combinatorial structure
\item References, such as entry in Sloane's Encyclopedia of Integer Sequences\footnote{
 \url{http://www.research.att.com/~njas/sequences/}
}
\end{enumerate}
(Note that gfun is the name of a Maple package developed by the INRIA Algorithms Group, for more information, see \footnote{
 \url{http://algo.inria.fr/libraries/\#gfun}
}.)
It is possible to search the database by initial values of the sequence, keywords, generating function, or closed form of the sequence.

\subsection{Lambda-Upsilon-Omega (LUO): symbolic combinatorics}
\label{sec:alas}

Authors: Bruno Salvy and Paul Zimmermann
\\
Download: \url{http://www.loria.fr/~zimmerma/software/luoV2.1.tar.gz}
\\
Last modified: May 1995\footnote{
 \url{http://algo.inria.fr/salvy/index.html}
}
\\
Website: \url{http://algo.inria.fr/libraries/libraries.html\#luo}
\\

LUO is a software project started in the late 1980s designed to automatically analyze algorithms.
It is no longer heavily used;
most of its functionality is available in \hyperref[sec:combstruct]{Combstruct} and the \codefont{equivalent} command in \hyperref[sec:gdev]{gdev}.\footnote{
 \url{http://algo.inria.fr/salvy/index.html}
}
In this document, we focus on the subset of its capabilities related to enumerative and analytic combinatorics,
omitting a discussion of its capabilities for algorithm analyis.

\hypertarget{alas}{LUO}
is made up of two modules: the algebraic analyzer (ALAS) and the analytic analyzer (ANANAS).
ALAS takes as input a combinatorial specification, either labeled or unlabeled, and outputs a system of equations over generating functions for the classes in the specification.
Then an intermediate process attempts to solve the equations explicitly; if successful, it passes the solutions to ANANAS.
ANANAS then employs a routine (which became \codefont{equivalent} in gdev) on the expressions which returns asymptotic expressions for the coefficients.
ALAS is described in further detail below, and ANANAS is described in Section \ref{sec:ananas}

Official documentation for LUO comes in the form of a main article discussing the algorithms used, but not the code \cite{assistant}, a cookbook containing a summary of \cite{assistant} and a selection of examples of algorithms being analyzed \cite{cookbook}, and some notes on the code and usage of ALAS, including a reference for writing programs in the syntax that the system can analyze, and examples of ALAS in action \cite{alas}.\\

As mentioned
\hyperlink{alas}{above},
ALAS, written by Paul Zimmermann, is a tool for translating specifications to systems of equations over generating functions.
That is, the user supplies a list of \(r\) equations
\begin{align*}
\mathcal{A}^{<1>} &= \Phi_1(\mathcal{A}^{<1>}, \dots, \mathcal{A}^{<r>}) \\
\mathcal{A}^{<2>} &= \Phi_2(\mathcal{A}^{<1>}, \dots, \mathcal{A}^{<r>}) \\
&\cdots \\
\mathcal{A}^{<r>} &= \Phi_r(\mathcal{A}^{<1>}, \dots, \mathcal{A}^{<r>}),
\end{align*}
where each \(\Phi_i(\cdots)\) represents an expression built from the \(\mathcal{A}\)'s using the basic admissible constructions, as well as the classes \(\mathcal{E}\) and \(\mathcal{Z}\).
Some restricted constructions are also allowed.

\begin{defin}
The \emph{valuation} of a class \( \mathcal{A}^{<i>} \) is the minimum size of an object in \( \mathcal{A}^{<i>} \) built according to the specification.
\end{defin}

\begin{defin}
\label{defin:welldefined}
A combinatorial specification is \emph{well defined} iff it satisfies the two properties
\begin{enumerate}
\item each class has finite valuation, and
\item for each class \( \mathcal{A}^{<i>} \) and \( n \geq 0 \), the number of objects of size \(n\) in \( \mathcal{A}^{<i>} \) built according to the specification is finite.
\end{enumerate}
\end{defin}
Since there is nothing preventing the user from supplying a non-well defined specification,
it would be nice if the well-definedness of a specification could be checked programmaticly and indeed it can and ALAS does this. See \cite{assistant} for details.

Once ALAS verifies that the specification is well-defined, it proceeds to generate the corresponding equations over generating functions using simple replacement rules.
These generating function equations can then be used to (among other things) compute initial values of the counting sequences.

The following theorem is proved in \cite{assistant}:
\begin{thm}
The number of arithmetic operations necessary for computing all the counting sequences associated with all classes \( \mathcal{A}^{<i>} \) up to size \(n\) is \(O ( rn^2 ) \).
\end{thm}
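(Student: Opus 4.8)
The plan is to reduce the specification to a normal form, pass to the equivalent system of generating functions, and then bound the cost of a degree‑by‑degree evaluation of the coefficients. First I would put the system in \emph{normal form} by naming subexpressions: introduce a fresh class symbol for every proper subexpression occurring in the $\Phi_i$, so that after the rewriting each equation has one of the shapes $\mathcal{A}^{<i>}=\mathcal{E}$, $\mathcal{A}^{<i>}=\mathcal{Z}$, $\mathcal{A}^{<i>}=\mathcal{A}^{<j>}+\mathcal{A}^{<k>}$, $\mathcal{A}^{<i>}=\mathcal{A}^{<j>}\times\mathcal{A}^{<k>}$, or $\mathcal{A}^{<i>}=\Theta(\mathcal{A}^{<j>})$ for a single, possibly restricted, unary construction $\Theta\in\{\textsc{Seq},\textsc{MSet},\textsc{PSet},\textsc{Cyc}\}$. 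Since each $\Phi_i$ is a fixed expression, this only multiplies the number of equations by a constant; the enlarged system still has $O(r)$ equations, and it is well defined exactly when the original one is. I would then translate it to a system of power‑series equations using the dictionary of admissible operators from \cite{ac}: sum $\mapsto$ sum, product $\mapsto$ product, $\textsc{Seq}(\mathcal{A})\mapsto(1-A(z))^{-1}$, and the $\exp$/$\log$‑type formulas such as $\textsc{MSet}(\mathcal{A})\mapsto\exp\bigl(\sum_{k\ge1}\tfrac1k A(z^k)\bigr)$. Note that well‑definedness (its second clause) forces the argument of every iterative construction to contain no object of size $0$, i.e. $a_0=0$; this is what makes those formulas meaningful and, more importantly, what makes the resulting recursions solvable one degree at a time.

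Next I would describe the evaluation itself: compute coefficients in order of increasing degree, so that for $n'=0,1,\dots,n$ one determines the whole vector $(a^{<i>}_{n'})_{1\le i\le r}$ from coefficients of degree $<n'$ that are already known. Within a single degree $n'$ there may still be dependencies among the $a^{<i>}_{n'}$: a sum $\mathcal{A}^{<i>}=\mathcal{A}^{<j>}+\mathcal{A}^{<k>}$ makes $a^{<i>}_{n'}$ depend on $a^{<j>}_{n'}$ and $a^{<k>}_{n'}$; the head term of a product makes $a^{<i>}_{n'}$ depend on $a^{<j>}_{n'}$ when $a^{<k>}_0\ne0$ (and symmetrically); and the $k=1$ term of an iterative construction makes $a^{<i>}_{n'}$ depend on $a^{<j>}_{n'}$. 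I would argue that these same‑degree dependencies form a directed acyclic graph, so the level‑$n'$ coefficients can be evaluated in a topological order; a directed cycle here would isolate a sub‑specification whose coefficient at degree $n'$ is either undetermined or forces infinitely many objects of some size, contradicting well‑definedness. This equivalence between combinatorial well‑definedness and solvability of the iteration is the real content of the argument and the step I expect to be the main obstacle; the cleanest route is to lean on the well‑foundedness analysis already carried out in \cite{assistant}, which establishes precisely that a well‑defined system is evaluated correctly by this scheme.

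It then remains to count arithmetic operations, taking each operation on an integer to be one step, as in the statement. At degree $n'$, one equation costs $O(1)$ if it is a sum; $O(n')$ if it is a product, since $a^{<i>}_{n'}=\sum_{m=0}^{n'}a^{<j>}_m a^{<k>}_{n'-m}$; and $O(n')$ if it is an iterative construction, using the classical differential recurrences for $\exp$, $\log$ and $(1-\cdot)^{-1}$ of a truncated power series (maintaining one auxiliary truncated series per such equation when convenient, e.g. for $\textsc{Cyc}$) together with the fact that $[z^{n'}]\sum_{k\ge1}\tfrac{c_k}{k}A(z^k)=\sum_{d\mid n'}\tfrac{d}{n'}\,c_{n'/d}\,a^{<j>}_d$ is a divisor sum of at most $n'$ terms; restricted constructions reduce to a bounded number of these operations via identities like $\textsc{Seq}_{\le k}(\mathcal{A})=(1-A^{k+1})/(1-A)$, with the same $O(n')$ bound once $k$ is regarded as a constant of the specification. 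Hence degree $n'$ costs $O(rn')$ overall, and summing, $\sum_{n'=0}^{n}O(rn')=O(rn^2)$, the claimed bound. I would finish by noting that the scheme needs only $O(rn)$ integers of storage, and that it is essentially what a routine such as Combstruct's \codefont{gfseries} carries out.
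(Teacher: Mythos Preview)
The paper does not actually prove this theorem: it only states it and defers entirely to \cite{assistant} for the proof. Your sketch---normalizing the specification into binary/unary constructors, translating via the dictionary of admissible operators, evaluating coefficients degree by degree using the well-definedness hypothesis to guarantee an acyclic same-degree dependency graph, and then bounding each equation at degree $n'$ by $O(n')$ operations---is precisely the argument given in \cite{assistant}, so there is nothing to contrast; your proposal is correct and matches the cited source.
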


\section{P\'olya theory}

\subsection{Mathematical background}

P\'olya theory is the study of counting symmetric objects, which makes use of group theory and generating functions.
It was originally developed by John Redfield, then refounded by George P\'olya, who used it to count chemical compounds, among other objects, and after whom the central theorem (Theorem \ref{thm:polya} on page \pageref{page:polyatheorem}) is named.

\begin{defin}
An \emph{action} of a group \( H \) on a set \( X \) is a homomorphism \( \theta: H \rightarrow \mathcal{S}(X) \) from \( H \) to the symmetric group on X.
\end{defin}

In this section, the groups we work with are assumed to be the images of group actions on finite sets, i.e.\ permutation groups.
Let \( G \) be such a group on a set \( X \).
Define an equivalence relation \( \equiv \) on \( X \) by
\[ x \equiv y \text{ iff } \exists g \in G : g(x) = y. \]
The set of \hypertarget{quotient}{equivalence classes} of \( X \) under \( \equiv \) is written \( X / G \).

Elements of \( X / G \) are called the \emph{orbits} of \( X \) under \( G \),
and the \emph{orbit} of an element \( x \in X \) is the equivalence class represented by \(x\) and is denoted by \( G(x) \).
The \emph{stabilizer} \( G_x \) of an element \( x \in X \) is
 \[ G_x = \{ g \in G : g(x) = x \} . \]

\vspace{10px}

Our first lemmas bring counting into the picture:
\begin{lem}[Orbit-stabilizer]
\label{lem:orbit}
If \( G \) is a permutation group on \( X \) then for all \(x \in X\),
\[ |G(x)| = |G|/|G_x|. \]
\end{lem}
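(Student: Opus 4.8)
The plan is to exhibit an explicit bijection between the orbit \( G(x) \) and the set of left cosets of the stabilizer \( G_x \) in \( G \), and then invoke Lagrange's theorem to conclude. First I would define a map \( \varphi \) from the coset space \( G/G_x = \{ g G_x : g \in G \} \) to the orbit \( G(x) \) by \( \varphi(g G_x) = g(x) \). The first task is to check that \( \varphi \) is well defined: if \( g G_x = h G_x \), then \( h^{-1} g \in G_x \), so \( (h^{-1} g)(x) = x \), and applying \( h \) gives \( g(x) = h(x) \); hence the value \( \varphi(g G_x) \) does not depend on the chosen coset representative. (Here I am using that \( G \) acts on \( X \), so composition of permutations behaves as expected.)

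Next I would verify that \( \varphi \) is a bijection. Surjectivity is immediate from the definition of the orbit: every element of \( G(x) \) has the form \( g(x) \) for some \( g \in G \), and that element is \( \varphi(g G_x) \). For injectivity, suppose \( \varphi(g G_x) = \varphi(h G_x) \), i.e.\ \( g(x) = h(x) \); then \( (h^{-1} g)(x) = x \), so \( h^{-1} g \in G_x \), which means \( g G_x = h G_x \). Thus \( \varphi \) is a bijection, and \( |G(x)| = |G/G_x| \), the number of left cosets of \( G_x \) in \( G \).

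Finally, since \( G \) is a finite group (being a permutation group on a finite set) and \( G_x \) is a subgroup, Lagrange's theorem gives \( |G| = |G_x| \cdot |G/G_x| \), so \( |G/G_x| = |G|/|G_x| \). Combining this with the bijection yields \( |G(x)| = |G|/|G_x| \), as desired.

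I do not anticipate a serious obstacle here; the only point requiring care is the well-definedness of \( \varphi \) on cosets, which is where the defining property of the stabilizer is used, and the appeal to Lagrange's theorem, which I would either cite as standard or prove in one line by noting that the left cosets of \( G_x \) partition \( G \) into blocks each of size \( |G_x| \).
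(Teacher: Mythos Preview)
Your proof is correct and follows essentially the same approach as the paper: both establish a bijection between the orbit \(G(x)\) and the set of left cosets of \(G_x\), yielding \(|G(x)| = |G|/|G_x|\). The paper's version is terser (it defines the bijection in the direction \(h(x) \mapsto hG_x\) and omits the well-definedness and injectivity checks that you spell out), but the underlying argument is identical.
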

\begin{proof}
 For any \( x \in X \), consider the mapping \( f: G \rightarrow X \) defined \( g \mapsto g(x) \).
Then there is a bijection between the image of \( f \), which is  \(G(x) \),
and the set of left cosets of \( G_x \), which has cardinality \( |G| / |G_x| \),
given by \( h(x) \mapsto hG_x \) for all \( h(x) \in G(x) \).
\end{proof}

\begin{lem}[Burnside]
\label{lem:burnside}
The number of orbits of a permutation group \( G \) on a set \( X \) is
\[ | X / G | = \frac{1}{|G|} \sum_{g \in G} \fix(g), \]
where \( \fix(g) \) is the number of fixed points of \( g \).
\end{lem}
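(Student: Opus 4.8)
The plan is to prove this by a double-counting argument applied to the set of fixed incidences
\[ T = \{ (g,x) \in G \times X : g(x) = x \}. \]
First I would count \( |T| \) by summing over the group: for each \( g \in G \), the number of \( x \) with \( g(x) = x \) is exactly \( \fix(g) \), so \( |T| = \sum_{g \in G} \fix(g) \). This is the right-hand side (before dividing by \( |G| \)), so the whole task reduces to showing \( |T| = |G| \cdot |X/G| \).

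Next I would count \( |T| \) the other way, by summing over \( X \): for each \( x \in X \), the number of \( g \) with \( g(x) = x \) is \( |G_x| \) by definition of the stabilizer, so \( |T| = \sum_{x \in X} |G_x| \). Now I invoke the Orbit-stabilizer lemma (Lemma \ref{lem:orbit}), which gives \( |G_x| = |G|/|G(x)| \) for every \( x \). Substituting, \( |T| = |G| \sum_{x \in X} 1/|G(x)| \).

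The key step is then to regroup this last sum by orbits. Since the orbits partition \( X \) (they are the equivalence classes of \( \equiv \)), I can write \( \sum_{x \in X} 1/|G(x)| = \sum_{O \in X/G} \sum_{x \in O} 1/|O| \), and the inner sum is \( |O| \cdot (1/|O|) = 1 \). Hence \( \sum_{x \in X} 1/|G(x)| = |X/G| \), giving \( |T| = |G| \cdot |X/G| \). Equating the two expressions for \( |T| \) and dividing by \( |G| \) yields the claimed formula. I do not anticipate a genuine obstacle here; the only point requiring care is the orbit-regrouping step, where one must use that \( |G(x)| = |O| \) is constant on each orbit \( O \) and that the orbits genuinely partition \( X \), both of which follow immediately from \( \equiv \) being an equivalence relation.
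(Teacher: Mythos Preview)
Your proposal is correct and essentially identical to the paper's proof: the paper writes $\sum_{g}\fix(g)=\sum_{g}\sum_{x}[g(x)=x]$, interchanges the order of summation to obtain $\sum_{x}|G_x|$, applies Lemma~\ref{lem:orbit}, and then regroups the sum $\sum_{x}1/|G(x)|$ orbit by orbit, exactly as you outline. The only cosmetic difference is that you name the incidence set $T$ explicitly, whereas the paper carries out the same double count via the Iverson bracket.
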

\begin{proof}
\begin{align*}
\sum_{g \in G} \fix(g) &= \sum_{g \in G} \sum_{x \in X} [g(x) = x] \\
&= \sum_{x \in X} \sum _{g \in G} [g(x) = x] \\
&= \sum_{x \in X} | G_x |\\
&= \sum_{x \in X} \frac{|G|}{|G(x)|} \\
&= |G| \sum_{x \in X} \frac{1}{|G(x)|} \\
&= |G| \sum_{A \in X/G} \sum_{x \in A} \frac{1}{|A|} \\
&= |G| \sum_{A \in X/G} |A| \frac{1}{|A|} \\
&= |G| |X / G| \qedhere
\end{align*}
\end{proof}

\vspace{10px}

\begin{defin}
A \emph{graph automorphism} of a graph \( \mathfrak{G} = (V,E) \) is a bijection \( \pi : V \rightarrow V \) such that \( \{ (\pi(v_1), \pi(v_2)) | (v_1, v_2) \in E \} = E \).
The set of all automorphisms of \( \mathfrak{G} \) forms a group, \( \aut(\mathfrak{G}) \).
\end{defin}

\begin{exa}
Say we are given a labeled graph \( \mathfrak{G} \), and we would like to find the number of relabelings (bijections on the vertex set) that yield distinct graphs.
Let \( X \) be set of all relabelings, and let \( \theta \) be the induced action of \( \aut(\mathfrak{G}) \) on \( X \), i.e. if \( \pi \in X \) and \( g \in \aut(\mathfrak{G}) \), then
\( \theta(g)(\pi)  = ( g(\pi_1), g(\pi_2), \dots) \).
Then the number of such relabelings, which is the number of isomorphic graphs on the same vertex set, is \( \left|X / \theta ( \aut(\mathfrak{G}) ) \right| \).

(The packages nauty\footnote{
 \url{http://cs.anu.edu.au/~bdm/nauty/}
},
saucy\footnote{
 \url{http://vlsicad.eecs.umich.edu/BK/SAUCY/}
}
and bliss\footnote{
 \url{http://www.tcs.hut.fi/Software/bliss/}
}, as well as some of the packages in this section,
can compute the automorphism group of a graph.
The code that nauty uses is also included in GRAPE.\footnote{
 \url{http://www.gap-system.org/Packages/grape.html}
})
\end{exa}


We now begin working up to the central theorem of P\'olya theory: a generalization of Lemma \ref{lem:burnside} for counting weighted colorings of an object.

\begin{defin}
 Let \(g \in G\) be a permutation, and let \( \boldsymbol{c}(g) \) be the \hyperref[exa:signature]{signature} of \( g \).
 Then the \emph{cycle index} of \(g\) is the formal monomial
 \[ z(g; \boldsymbol{s} ) = \boldsymbol{s}^{\boldsymbol{c}(g) }. \]
The cycle index of
the whole permutation group
\( G \) is the terminating formal power series
\[ Z(G; \boldsymbol{s}) = \frac{1}{|G|} \sum_{g \in G} z(g; \boldsymbol{s}). \]
\end{defin}

Let \( \Phi = \{ \phi_1, \phi_2, \dots \} \) be a countable set of ``colors'', each of which has a non-negative weight \(w(\phi_i) \),
such that \( w^{-1}(n) \) is finite for all \( n \geq 0 \).
We define the color-counting generating function
\[ a(t) = \sum_{ n \geq 0 } a_n t^n, \]
where \( a_n = | w^{-1}(n) | \).

Let \( \mathcal{F} = \Phi^X \) be the set of functions from \( X \rightarrow \Phi \);
we call these functions \emph{colorings}.
The \emph{total weight} of a coloring \( f \in \mathcal{F} \) is \( \sum_{x \in X} w(f(x)) \).
Define an action \( \theta \) of \( G \) on \( \mathcal{F} \) by
\[ g \mapsto ( f \mapsto (f \circ g^{-1})), \]
and let the coloring-counting generating function be
\[ b(t) = \sum_{n \geq 0} b_n t^n, \]
where \( b_n \) is the number of orbits of \( \theta(G) \) on \( \mathcal{F} \) with total weight \( n \).

\begin{lem}
\label{lem:polya}
Say \(|X| = m \).
Then the generating function for functions from \(X\) to \( \Phi \) fixed by a permutation \( \theta(g) \) with total weight marked by \( t \) is
\[ z(g; a(t), a(t^2), \dots, a(t^m) ) . \]
\end{lem}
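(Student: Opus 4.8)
The plan is to identify exactly which colorings $f \colon X \to \Phi$ are fixed by $\theta(g)$ and then build their generating function cycle-by-cycle. First I would unwind the definition of the action: $\theta(g)(f) = f \circ g^{-1}$, so $f$ is fixed iff $f(g^{-1}(x)) = f(x)$ for all $x \in X$, equivalently $f(g(x)) = f(x)$ for all $x$. This says precisely that $f$ is constant on each orbit of the cyclic group $\langle g \rangle$ acting on $X$, i.e. constant on each cycle of the permutation $g$. Thus a fixed coloring is the same data as a choice, for each cycle $C$ of $g$, of a single color $\phi_C \in \Phi$ to be assigned to every element of $C$.

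Next I would compute the weight contribution. If $C$ is a cycle of length $\ell$ and we paint it with color $\phi$, then since the total weight of $f$ is $\sum_{x \in X} w(f(x))$, this cycle contributes $\ell \, w(\phi)$ to the total weight. Hence the generating function (in $t$ marking total weight) for colorings of a single length-$\ell$ cycle is
\[
\sum_{\phi \in \Phi} t^{\ell\, w(\phi)} = \sum_{n \geq 0} |w^{-1}(n)|\, t^{\ell n} = \sum_{n \geq 0} a_n (t^\ell)^n = a(t^\ell).
\]
Because the colors chosen on distinct cycles are independent and the total weight is the sum of the per-cycle weights, the generating function for all fixed colorings is the product of $a(t^{|C|})$ over all cycles $C$ of $g$.

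Finally I would rewrite this product in terms of the cycle structure. Grouping cycles by length, $g$ has $c_j(g)$ cycles of length $j$, so the product equals $\prod_{j \geq 1} a(t^j)^{c_j(g)}$. Comparing with the definition $z(g;\boldsymbol{s}) = \boldsymbol{s}^{\boldsymbol{c}(g)} = \prod_{j\geq 1} s_j^{c_j(g)}$, this is exactly $z(g;\boldsymbol{s})$ evaluated at $s_j = a(t^j)$. Since $|X| = m$, every cycle of $g$ has length at most $m$, so $c_j(g) = 0$ for $j > m$ and the expression truncates to $z(g; a(t), a(t^2), \dots, a(t^m))$, as claimed. I do not anticipate a genuine obstacle here; the only point requiring care is the first step — correctly translating "fixed by $\theta(g)$" into "constant on the cycles of $g$" — after which the argument is a routine product-of-generating-functions bookkeeping.
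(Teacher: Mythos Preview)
Your proposal is correct and follows essentially the same approach as the paper: identify the fixed colorings as those constant on the cycles of $g$, compute the per-cycle generating function $a(t^\ell)$, and take the product grouped by cycle length to recover $z(g; a(t), \dots, a(t^m))$. Your write-up simply supplies more detail (unwinding $\theta(g)(f)=f\circ g^{-1}$ explicitly and noting why the product truncates at $m$) than the paper's terse version.
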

\begin{proof}
A function is fixed by \( \theta(g) \) iff it is constant on the cycles of \( g \).
So a function fixed by \( \theta(g) \) is specified by, for each \( i \), a mapping between each cycle of size \( i \) and \( \Phi \).
For each cycle size \( i \), these mappings have generating function \( a(t^i)^{c_i(g)} \) since there are \(c_i(g) \) cycles of size \( i \) and
each color of weight \( k \) contributes a function of weight \( ki \) that is constant on the \(i\)-cycle.
The overall generating function is thus
\[ a(t)^{c_1(g)} a(t^2)^{c_2(g)} \cdots a(t^m) = z(g; a(t), a(t^2), \dots, a(t^m) ). \qedhere \]
\end{proof}

\begin{thm}[P\'olya enumeration theorem, single variable version]
\label{page:polyatheorem}
\label{thm:polya}
Say \(|X| = m \), then
\[ b(t) = Z(G; a(t), a(t^2), \dots, a(t^m)). \]
\end{thm}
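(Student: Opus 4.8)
The plan is to combine Burnside's lemma (Lemma \ref{lem:burnside}) with the fixed-point count from Lemma \ref{lem:polya}, tracking weights by the formal variable $t$. First I would introduce the weight-refined version of Burnside's lemma: for each $n \geq 0$, the number $b_n$ of orbits of $\theta(G)$ on the set $\mathcal{F}_n$ of colorings of total weight $n$ equals $\frac{1}{|G|}\sum_{g \in G} \fix_n(\theta(g))$, where $\fix_n(\theta(g))$ is the number of weight-$n$ colorings fixed by $\theta(g)$. This is just Lemma \ref{lem:burnside} applied to the action of $G$ on the (finite) set $\mathcal{F}_n$, which is legitimate because each color class $w^{-1}(k)$ is finite and a coloring of finite total weight on the finite set $X$ can involve only finitely many colors, so $\mathcal{F}_n$ is finite and $\theta(G)$ restricts to a permutation group on it.

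Next I would package these equations over all $n$ into a single generating-function identity. Multiplying the weight-$n$ Burnside identity by $t^n$ and summing over $n \geq 0$ gives
\[ b(t) = \sum_{n \geq 0} b_n t^n = \frac{1}{|G|} \sum_{g \in G} \sum_{n \geq 0} \fix_n(\theta(g))\, t^n. \]
The inner sum $\sum_{n \geq 0}\fix_n(\theta(g)) t^n$ is precisely the generating function for colorings fixed by $\theta(g)$ with total weight marked by $t$, which by Lemma \ref{lem:polya} equals $z(g; a(t), a(t^2), \dots, a(t^m))$. Substituting this in yields
\[ b(t) = \frac{1}{|G|}\sum_{g \in G} z(g; a(t), a(t^2), \dots, a(t^m)), \]
and the right-hand side is by definition $Z(G; a(t), a(t^2), \dots, a(t^m))$, since the cycle index is $Z(G;\boldsymbol{s}) = \frac{1}{|G|}\sum_{g\in G} z(g;\boldsymbol{s})$ and evaluation of a formal power series at $\boldsymbol{s} = (a(t), a(t^2), \dots)$ is linear. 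This completes the argument.

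The main obstacle, and the only place requiring genuine care, is justifying the interchange of summation and the finiteness needed to apply Burnside at each weight level $n$: one must confirm that $\mathcal{F}_n$ is finite (so Lemma \ref{lem:burnside} applies) and that the double sum over $n$ and $g$ converges in the formal-power-series sense, i.e. that for each fixed $n$ only finitely many terms contribute to the coefficient of $t^n$ — which is immediate since $G$ is finite and each $z(g; a(t), a(t^2), \dots, a(t^m))$ is itself a well-defined formal power series (a finite product of the series $a(t^i)$, each of which has constant term $a_0$, but more importantly has nonnegative integer coefficients so the product is well-defined). Everything else is bookkeeping: matching the formal monomial $z(g;\boldsymbol{s}) = \boldsymbol{s}^{\boldsymbol{c}(g)}$ evaluated at $s_i = a(t^i)$ against the product $a(t)^{c_1(g)} a(t^2)^{c_2(g)}\cdots$ appearing in Lemma \ref{lem:polya}, and observing that terms with $c_i(g) = 0$ for $i > m$ do not arise because a permutation of an $m$-element set has no cycles longer than $m$.
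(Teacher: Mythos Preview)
Your proposal is correct and is exactly the paper's approach: the paper's entire proof reads ``Immediate from Lemmas \ref{lem:burnside} and \ref{lem:polya}.'' You have simply spelled out the bookkeeping (applying Burnside weight-by-weight, summing in $t$, and invoking Lemma \ref{lem:polya} for the fixed-coloring generating function) that the paper leaves implicit.
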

\begin{proof}
Immediate from Lemmas \ref{lem:burnside} and \ref{lem:polya}.
\end{proof}

\begin{exa}
A \emph{necklace} of size \(n \) is a cycle of colored beads that can be flipped over or rotated and still be considered the same object.
In this example, we count black and white necklaces of size \( 4 \) with a given number of black beads.
We can use as our set \( X = \{1,2,3,4\} \), and use as our permutation group on \( X \) the dihedral group \( \mathcal{D}_4 \) with \( 8 \) elements.
We can let weight be the number of black beads, so that \( a(t) = 1 + t \).
Since \( \mathcal{D}_4 \) has cycle index
\[ Z(\mathcal{D}_4; s_1, s_2, s_3, s_4) = \frac{s_1^4+2 s_1^2 s_2+3 s_2^2+2 s_4}{8}, \]
we have
\[ b(t) = Z(\mathcal{D}_4; 1+t, 1+t^2, 1+t^3, 1+t^4) = 1+t+2 t^2+t^3+t^4. \]
From this generating function, we can also read off that there are \(6\) necklaces of size \( 4 \) with (at most) two colors of bead, as shown in Figure \ref{fig:necklaces}, taken from MathWorld\footnote{
 \url{http://mathworld.wolfram.com/Necklace.html}
}.

\begin{figure}
\centering  \includegraphics[width=400px]{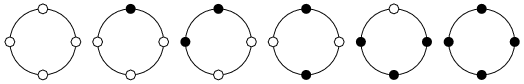}
\caption{The six necklaces with four beads and two colors. \label{fig:necklaces}}
\end{figure}
\end{exa}

\begin{exa}
\label{exa:graphs}
 Suppose we would like to know the number of unlabeled graphs with \( 3 \) vertices and \( k \) edges, \( 0 \leq k \leq 3 \), up to isomorphism.
 P\'olya's enumeration theorem can be used if we represent unlabeled graphs with \( 3 \) vertices as \(2\)-colorings of the set \( E \) of all \( \binom{3}{2} \) possible edges, where the color black represents an edge and the color white represents no edge.
 Total weight of a coloring equals the number of black edges, so our color-counting generating function is \(a(t) = 1+t \).
 The group of permutations on \( E \) that we need to quotient out is isomorphic to \(\mathcal{S}_3\), the symmetric group on the three vertices.
 The group \( \mathcal{S}_3 \) has cycle index
\[ Z( \mathcal{S}_3; s_1, s_2, s_3) = \frac{s_1^3 + 3 s_1 s_2 + 2 s_3}{3!}, \]
thus
\[ b(t) = Z( \mathcal{S}_3; 1+t, 1+t^2, 1+t^3)  = 1 + t + t^2 + t^3, \]
and we see that there is exactly one graph, up to isomorphism, with \( 3 \) vertices and \( k \) edges, \( 0 \leq k \leq 3 \).
Each of these is shown in Figure \ref{fig:graphs}.

\begin{figure}
\centering  \includegraphics[width=272px]{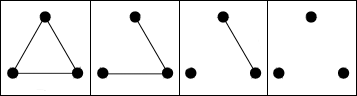}
\caption{The four unlabeled graphs with \( 3 \) vertices. \label{fig:graphs}}
\end{figure}

\end{exa}

For more information on P\'olya theory, a more general version of Theorem \ref{thm:polya}, and many more examples, see \cite{speciesbook,polya}\footnote{
 \url{http://en.wikipedia.org/wiki/P\%C3\%B3lya_enumeration_theorem}
}.

\subsection{A note on P\'olya theory packages}
Our goal in this section is to describe packages designed for P\'olya theory-type counting, rather than general computational group theory.
Capable software for the latter includes GAP, Magma, Maple, Mathematica, and Sage.
In the words of the authors of Combinatorica, ``Our aim in introducting permutation groups into Combinatorica is primarily for solving combinatorial enumeration problems.
We make no attempt to efficiently represent permutation groups or to solve many of the standard computational problems in group theory.'' \cite{skiena}

\subsection{COCO}
Download: \url{http://www.win.tue.nl/~aeb/ftpdocs/math/coco/coco-1.2a.tar.gz}
\\

COCO is a package for doing computations with permutation groups which was designed to investigate \textbf{co}herent \textbf{co}nfigurations, which are a certain type of edge--colored complete graphs \cite{cocopaper}.
COCO includes routines for, among other things,
finding the automorphism group of an edge--colored complete graph and,
given a base set and a permutation group on that set,
computing the induced permutation group on a set of combinatorial structures over the base set.
For more information, see its documentation and \cite{cocopaper}.

\subsection{Combinatorica: P\'olya theory}

Authors: Sriram  Pemmaraju and Steven Skiena
\\
Download: \url{http://www.cs.uiowa.edu/~sriram/Combinatorica/NewCombinatorica.m}
\\
Last modified: 2006
\\
Website: \url{http://www.cs.sunysb.edu/~skiena/combinatorica/}
\\

Combinatorica has P\'olya-theoretic functionality that integrates with the rest of its capabilities, such as basic combinatorial objects, a description of which, along with an overview of the Combinatorica package can be found in Section \ref{sec:combinatoricabasicobjects}.\\

Combinatorica has built-in  rules representing the symmetric, cyclic, dihedral and alternating groups, as well as the ability to create groups from other groups or simply from a set of permutations.
Combinatorica can compute the automorphism group of a graph.

The rules \codefont{Orbits} and \codefont{OrbitRepresentatives} take a set and a permutation group, and optionally how the group acts on the set (the default action being the identity map), and return the set of orbits and representatives of those orbits, respectively.

\begin{exa}
The \codefont{OrbitRepresentatives} rule can be used to
list all distinct necklaces of size \( 4 \) with \( 2 \) colors.
To do so, we evaluate the rule with the group \codefont{DihedralGroup[4]}, which is \(D_4\), and the set of words of length \( 4 \) over the letters \(R\) and \(B\):

\begin{snippet}
\codefont{In[1]:= OrbitRepresentatives[DihedralGroup[4], Strings[\{R, B\}, 4]] }

\codefont{Out[1]:= \{\{R,R,R,R\},\{B,B,B,B\},\{B,B,B,R\},\{B,B,R,R\},\{B,R,B,R\},\{B,R,R,R\}\} }
\end{snippet}
\end{exa}

The rule \codefont{CycleStructure} gives the cycle index of a single permutation, and \codefont{CycleIndex} gives the cycle index of a permutation group.

Combinatorica comes with special rules for cycle index of symmetric, alternating, cyclic and dihedral groups which work much faster than \codefont{CycleIndex}.

For P\'olya's enumeration theorem, we have the rule \codefont{OrbitInventory}, which takes the cycle index \( G( Z; \boldsymbol{s} ) \) of a group \(G\) and a list \( (w_1, w_2, \dots, w_m ) \)
of expressions and returns \( G(Z; \sum_{i=1}^m w_i, \sum_{i=1}^m w_i^2, \dots ) \).

\begin{exa}
Letting \( G = \mathcal{D}_4 \) and setting \( w_1 = 1, w_2 = t \), we can obtain the generating function for \( 2 \)-colored necklaces of size \( 4 \):

\begin{snippet}
\codefont{In[2]:= dihedralGroupCycleIndex =  DihedralGroupIndex[4, x];}\\
\codefont{colorEnumerator = \{1, t\}; }

\codefont{In[3]:= OrbitInventory[dihedralGroupCycleIndex, x, colorEnumerator] }

\codefont{Out[3]:= 1 + t + 2 t\^{}2 + t\^{}3 + t\^{}4 }
\end{snippet}

It turns out that Combinatorica has a built-in rule for this type of result:

\begin{snippet}
\codefont{In[4]:= NecklacePolynomial[4, colorEnumerator, Dihedral]}

\codefont{Out[4]:= 1 + t + 2 t\^{}2 + t\^{}3 + t\^{}4 }
\end{snippet}
\end{exa}

\begin{exa}
As our last example of Combinatorica in action, we count the number of unlabeled graphs with \( 3 \) vertices.
To apply P\'olya's theorem,
we define the set of all \(2\)-subsets of \( [1..3] \), representing all \( \binom{3}{2} \) edges:

\begin{snippet}
\codefont{In[5]:= set = KSubsets[Range[3], 2]}

\codefont{Out[5]:= \{\{1,2\},\{1,3\},\{2,3\}\} }
\end{snippet}

The group of permutations to quotient out is the set containing each permutation of \codefont{set} obtainable by applying a permutation \( g \in S_3 \) to each \(2\)-set in \codefont{set}.
Combinatorica has the rule \codefont{KSubsetGroup} to create such a group:

\begin{snippet}
\codefont{In[6] := group = KSubsetGroup[SymmetricGroup[3], set];}
\end{snippet}

We stated \hyperref[exa:graphs]{above} that this group is isomorphic to \( S_3 \).
We can prove this with the Combinatorica rule \codefont{MultiplicationTable}, which takes a set and an operation and gives the group multiplication table of the group they form:

\begin{snippet}
\codefont{In[7] := MultiplicationTable[SymmetricGroup[3], Permute] // TableForm}

\codefont{Out[7]//Tableform=}
\[
\begin{array}{llllll}
 \codefont{1} & \codefont{2} & \codefont{3} & \codefont{4} & \codefont{5} & \codefont{6} \\
 \codefont{2} & \codefont{1} & \codefont{5} & \codefont{6} & \codefont{3} & \codefont{4} \\
 \codefont{3} & \codefont{4} & \codefont{1} & \codefont{2} & \codefont{6} & \codefont{5} \\
 \codefont{4} & \codefont{3} & \codefont{6} & \codefont{5} & \codefont{1} & \codefont{2} \\
 \codefont{5} & \codefont{6} & \codefont{2} & \codefont{1} & \codefont{4} & \codefont{3} \\
 \codefont{6} & \codefont{5} & \codefont{4} & \codefont{3} & \codefont{2} & \codefont{1}
\end{array}
\]

\codefont{In[8] := MultiplicationTable[group, Permute] // TableForm}

\codefont{Out[8]//Tableform=}
\[
\begin{array}{llllll}
 \codefont{1} & \codefont{2} & \codefont{3} & \codefont{4} & \codefont{5} & \codefont{6} \\
 \codefont{2} & \codefont{1} & \codefont{5} & \codefont{6} & \codefont{3} & \codefont{4} \\
 \codefont{3} & \codefont{4} & \codefont{1} & \codefont{2} & \codefont{6} & \codefont{5} \\
 \codefont{4} & \codefont{3} & \codefont{6} & \codefont{5} & \codefont{1} & \codefont{2} \\
 \codefont{5} & \codefont{6} & \codefont{2} & \codefont{1} & \codefont{4} & \codefont{3} \\
 \codefont{6} & \codefont{5} & \codefont{4} & \codefont{3} & \codefont{2} & \codefont{1}
\end{array}
\]
\end{snippet}

In order to apply P\'olya's enumeration theorem, we use \codefont{CycleIndex} to compute the cycle index of \codefont{group}:

\begin{snippet}
\codefont{In[9] := cycleIndex = CycleIndex[group, x]}

\codefont{Out[9] := x[1]\^{}3/6 + x[1] x[2]/2 + x[3]/3}
\end{snippet}

Now \codefont{OrbitInventory} can reproduce our result:

\begin{snippet}
\codefont{In[10] := OrbitInventory[cycleIndex,x,\{1,t\}] }

\codefont{Out[10] := 1 + t + t\^{}2 + t\^{}3}
\end{snippet}

\sloppy We did not actually have to do all this to find the total number, since Combinatorica includes the rules \codefont{NumberOfGraphs} and \codefont{ListGraphs} to count and list all nonisomorphic graphs with a given number of vertices.

\begin{snippet}
\codefont{In[11] := NumberOfGraphs[3]}

\codefont{Out[11] := 4}
\end{snippet}
\end{exa}

\subsection{GraphEnumeration}
Author: Doron Zeilberger
\\
Download: \url{http://www.math.rutgers.edu/~zeilberg/tokhniot/GraphEnumeration}
\\
Last Modified: July 20, 2010
\\
Website: \url{http://www.math.rutgers.edu/~zeilberg/mamarim/mamarimhtml/GE.html}
\\

This Maple package implements the P\'olya-theoretic methods of \cite{graphical} to solve various counting problems of the following type: How many nonisomorphic graphs are there with \(n\) vertices, \(m\) edges, and some property \(P\)?

The types of graphs that GraphEnumeration counts include
unlabeled connected graphs according to the number of edges,
unlabeled regular \( k \)-hyper-graphs according to edges,
unlabeled rooted trees,
unlabeled trees,
all (connected or general) unlabeled simple graphs with a given degree sequence, and
all (connected or general) unlabeled multi-graphs with a given degree sequence.

\subsection{PermGroup}

Author: Thomas Bayer
\\
Last modified: June 21, 2004
\\
Website: \url{http://www.risc.jku.at/research/combinat/software/PermGroup/}
\\

PermGroup is a Mathematica package for permutation groups, group actions and counting.
Documentation is scant,
but like Combinatorica, PermGroup has built-in rules for the most commonly used permutation groups,  methods for creating custom groups, and ways to use them to solve counting problems involving symmetry.
PermGroup implements the Schreier-Sims algorithm\footnote{
 \url{http://en.wikipedia.org/wiki/Schreier-Sims_algorithm}
} for computing group orders.
The examples below illustrate some of PermGroup's functionality, for more details consult the documentation and code.

\begin{exa}
To obtain the generating function for \(2\)-colored necklaces of length \(4\) with \(t\) marking the number of black beads,
we can  use the \codefont{DihedralGroupCIPoly} rule which returns the cycle index for the dihedral group.

\begin{snippet}
\codefont{In[1] := DihedralGroupCIPoly[4, x]}

\codefont{Out[1] := 1/4 (x[1]\^{}2 x[2] + x[2]\^{}2) + 1/8 (x[1]\^{}4 + x[2]\^{}2 + 2 x[4])}
\end{snippet}

Now we can replace \(x_n\) with \( 1 + t^n \):

\begin{snippet}
\codefont{In[2] := \% /. \{x[n\_] -> 1 + t\^{}n\}}

\codefont{Out[2] := 1/4 ((1 + t)\^{}2 (1 + t\^{}2) + (1 + t\^{}2)\^{}2) +  1/8 ((1 + t)\^{}4 + }\\
\codefont{(1 + t\^{}2)\^{}2 + 2 (1 + t\^{}4))}
\end{snippet}

Finally, we expand the polynomial and get something familar:

\begin{snippet}
\codefont{In[3] := Expand@\%}

\codefont{Out[3] := 1 + t + 2 t\^{}2 + t\^{}3 + t\^{}4}
\end{snippet}
\end{exa}

\begin{exa}
To obtain the generating function for unlabeled graphs with \( 3 \) vertices, we first define the set of all \(2\)-subsets of \( [1..3] \), representing all \( \binom{3}{2} \) edges:

\begin{snippet}
\codefont{In[4] := graphs3 = PermGroup\backtick PairSet[3]}

\codefont{Out[4] := \{\{1,2\},\{1,3\},\{2,3\}\}}
\end{snippet}

We let \codefont{s3} represent the symmetric group \(\mathcal{S}_3\) using the \codefont{SymmGroup} rule:

\begin{snippet}
\codefont{In[5] := s3 = PermGroup\backtick Generate[PermGroup\backtick SymmGroup[3]];}
\end{snippet}

Again, the group we really need is the induced group of \codefont{s3} on \codefont{graphs3}.
PermGroup can give this to us with the \codefont{TransformGroup} rule:

\begin{snippet}
\codefont{In[6] := g3 = TransformGroup[s3, graphs3, PairSetAction];}
\end{snippet}

Now we could compute the cycle index polynomial of \codefont{g3} and substitute in \( 1 + t^n \) as above, but
PermGroup offers the \codefont{PolyaEnumeration} rule to save some work.
The expression \codefont{PolyaEnumeration[group, x, list]} evaluates to the cycle index of \codefont{group} in \codefont{x[1], x[2], \dots} with \codefont{x[i]} replaced with \codefont{Total[list\^{}i]}.

\begin{snippet}
\codefont{In[7] := Expand[PolyaEnumeration[g3, x, \{1, t\}]] }

%
%
%
\codefont{Out[7] := 1 + t + t\^{}2 + t\^{}3}
\end{snippet}
\end{exa}

\section{Combinatorial species}
\label{sec:species}

\subsection{Mathematical background}

The theory of combinatorial species,
``the most fruitful unifying concept in enumerative combinatorics of this quarter-century''
according to Zeilberger,\footnote{
 \url{http://www.math.rutgers.edu/~zeilberg/khaver.html}
}
is a theoretical framework in the language of which a great diversity of families of combinatorial objects and their generating functions can be described.

While the theory of species and symbolic combinatorics (the subject of Section \ref{sec:symboliccombinatorics}) have a lot in common, one of the important differences is that in the theory of species, labeled and unlabeled objects are explicitly connected through P\'olya theory.
Martin Rubey, co-author of the Aldor-Combinat package described in the next section, said in 2008:
\begin{quote}
Combstruct and MuPAD-Combinat really had ``usability'' as
primary goal.  As one consequence, they do not implement species, but rather
``combinatorial classes'', that is, collections of objects with a size function.

The main drawback of that method is that you cannot treat labelled and
unlabelled objects uniformly, there is no such concept as an isomorphism type,
which, in my opinion, is the main strength of (ordinary) species.\footnote{
 \url{http://www.mail-archive.com/aldor-combinat-devel@lists.sourceforge.net/msg00503.html}
}
\end{quote}
Connections between symbolic combinatorics and P\'olya theory have been made \cite{ac}%
\footnote{
 \url{http://www.mathematik.uni-stuttgart.de/~riedelmo/papers/collier.pdf}
}, but they are not as central to the theory as they are in species.
The article \cite{Salvy2011} is ``self-contained and can be used as a dictionary between the theory of species and the symbolic method of Flajolet and Sedgewick''.
\\

We begin with the definition of a species:
\begin{defin}
A \emph{species (of structures)} \( F \) is a pair of functions,
\begin{enumerate}
 \item the first of which maps each finite set \(U\) to a finite set \(F[U]\), and
 \item the second of which maps each bijection \( \pi: U \rightarrow V \) to a function
\[ F[\pi]:F[U] \rightarrow F[V]. \]
\end{enumerate}
The functions \( F[\pi] \) must satisfy the following properties:
\begin{enumerate}
 \item for each pair of bijections \( \pi : U \rightarrow V \) and \( \tau : V \rightarrow W \),
\[ F[\tau \circ \pi] = F[\tau] \circ F[\pi] , \]
 \item and if \(\Id_U : U \rightarrow U\) is the identity function,
\[ F[\Id_U] = \Id_{F[U]}.\]
\end{enumerate}
\end{defin}

We note that if \( H \) is a group of permutations on \( U \), then the second function of \( F \) is a group homomorphism from \( H \) to \(\mathcal{S}(F[U]) \), i.e.\ an action of \( H \) on \(F[U]\).
We denote the image of this action by \( F[U; H] \).

\begin{defin}
An element \( s \in F[U] \) is called an \(F\)-\emph{structure} on \(U\), or alternatively a \emph{structure of species \(F\) on \(U\)}.
The function \( F[\pi] \) is called the \emph{transport} of \(F\)-structures along \( \pi \).
\end{defin}

If \( s \in F[U] \), we use the notation \( \pi \cdot s \) to denote \( F[\pi](s) \).
Note that \( |U| = |V| \) implies \( |F[U]| = |F[V]| \).

Whereas in symbolic combinatorics we had the classes \(\mathcal{E}\) and \(\mathcal{Z}\), in combinatorial species, the most elementary species are the \emph{empty set species} \(E\), where
\[ E[U] = \begin {cases}
\{\emptyset\}  & \text{if } |U|=0, \\
\emptyset & \text{otherwise},
\end {cases}
\]
and the \emph{singleton species} \(S\), where
\[ S[U] = \begin {cases}
\{u\}  & \text{if } U=\{u\}, \\
\emptyset & \text{otherwise}.
\end {cases}
\]

\begin{defin}
Consider two \(F\)-structures \(s_1 \in F[U]\) and \( s_2 \in F[V] \).
A bijection \( \pi: U \rightarrow V \) is called an \emph{isomorphism} of \(s_1\) to \(s_2\) iff \(s_2 = \pi \cdot s_1 \).
If there is such a bijection, we write \(s_1 \sim s_2 \) and we say these two structures have the same \emph{isomorphism type}.
The isomorphism types of a species are isomorphism classes and thus equivalence classes.
\end{defin}

An \(F\)-structure \(s \in F[U] \) on a set \( U \) can be referred to as a \emph{labeled} structure, whereas an isomorphism type of \( F \)-structures can be referred to \emph{unlabeled} structure.

Attentive readers will be detecting a whiff of P\'olya theory by now, which will become stronger as we define the generating functions associated with a species.

\begin{defin}
The \emph{exponential generating function} of a species \( F \) is the formal power series
\[ F(z) = \sum_{n \geq 0} f(n) \frac{z^n}{n!}, \]
where \( f(n) = |F[[1..n]]| \).
\end{defin}

\begin{defin}
\label{defin:t}
Denote by \( T(F_n) \) the \hyperlink{quotient}{quotient set} \(F[[1..n]]/ F[[1..n]; \mathcal{S}_n] \) of unlabeled \(F\)-structures on \( [1..n] \).
Then the \emph{isomorphism type generating function} of a species \(F\) is the formal power series
\[ \tilde{F}(z) = \sum_{n \geq 0} \tilde{f}(n) z^n, \]
where \(\tilde{f}(n) = |T(F_n)| \).
\end{defin}

As in Lemma \ref{lem:burnside}, if \( \pi \) is a permutation, let \(\fix(\pi)\) be the number of fixed points of \( \pi \).

\begin{defin}
Let \( \mathcal{S}_n \) denote the group of permutations of \( [1..n] \).
Then the \emph{cycle index generating function} of a species \( F \) is the formal power series
\[ Z_F(\boldsymbol{z}) = \sum_{n \geq 0}\frac{1}{n!}  \sum_{\pi \in S_n} \fix(F[\pi]) \boldsymbol{z}^{\boldsymbol{c}(\pi)} . \]
\end{defin}

\begin{thm}
For any species \( F \) we have
\begin{enumerate}
 \item \(F(z) = Z_F(z,0,0,\dots) \)
 \item \( \tilde{F}(z) = Z_F(z, z^2, z^3, \dots ) \).
\end{enumerate}
\end{thm}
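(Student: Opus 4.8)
The plan is to prove both identities by direct substitution into the definition
\[ Z_F(\boldsymbol{z}) = \sum_{n \geq 0}\frac{1}{n!}\sum_{\pi \in \mathcal{S}_n} \fix(F[\pi])\, \boldsymbol{z}^{\boldsymbol{c}(\pi)}, \]
in each case tracking which monomials $\boldsymbol{z}^{\boldsymbol{c}(\pi)} = z_1^{c_1(\pi)} z_2^{c_2(\pi)} \cdots$ survive the chosen specialization of the variables.

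For part (1), I would put $z_1 = z$ and $z_k = 0$ for all $k \geq 2$. Then $\boldsymbol{z}^{\boldsymbol{c}(\pi)}$ vanishes unless $c_k(\pi) = 0$ for every $k \geq 2$, which forces $\pi = \Id_{[1..n]}$; that permutation has $c_1 = n$, so the surviving monomial is $z^n$. By the second species axiom, $F[\Id_{[1..n]}] = \Id_{F[[1..n]]}$, hence $\fix(F[\Id_{[1..n]}]) = |F[[1..n]]| = f(n)$. Collecting terms yields $Z_F(z,0,0,\dots) = \sum_{n \geq 0} \frac{f(n)}{n!} z^n = F(z)$.

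For part (2), I would put $z_k = z^k$ for every $k \geq 1$. Since each $\pi \in \mathcal{S}_n$ satisfies $\sum_{k \geq 1} k\, c_k(\pi) = n$, we get $\boldsymbol{z}^{\boldsymbol{c}(\pi)} = \prod_{k \geq 1} z^{k\, c_k(\pi)} = z^n$ for \emph{every} $\pi$, so
\[ Z_F(z,z^2,z^3,\dots) = \sum_{n \geq 0} z^n \cdot \frac{1}{n!}\sum_{\pi \in \mathcal{S}_n} \fix(F[\pi]). \]
The inner average is precisely the Burnside count (Lemma \ref{lem:burnside}) of the number of orbits of the action of $\mathcal{S}_n$ on $F[[1..n]]$ — equivalently, of the permutation group $F[[1..n]; \mathcal{S}_n]$ — and by Definition \ref{defin:t} that number of orbits is exactly $\tilde f(n) = |T(F_n)|$. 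Hence $Z_F(z,z^2,z^3,\dots) = \sum_{n \geq 0} \tilde f(n) z^n = \tilde F(z)$.

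I do not expect a serious obstacle; the one point needing a little care is the appeal to Burnside in part (2), since the homomorphism $\mathcal{S}_n \rightarrow \mathcal{S}(F[[1..n]])$ giving the action need not be injective. This is handled either by noting that passing from the image group to all of $\mathcal{S}_n$ multiplies numerator and denominator of the average by the common fibre size $|\ker|$, leaving it unchanged, or by simply re-running the computation in the proof of Lemma \ref{lem:burnside} verbatim (it uses only the orbit–stabilizer identity $|G(x)|\,|G_x| = |G|$ of Lemma \ref{lem:orbit}, which is valid for an arbitrary group action), thereby obtaining directly that $\sum_{\pi \in \mathcal{S}_n} \fix(F[\pi]) = n!\, \tilde f(n)$.
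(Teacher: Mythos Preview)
Your proposal is correct and follows essentially the same route as the paper: direct substitution in both parts, with only the identity surviving in part~(1) and Burnside's lemma applied in part~(2). The paper handles the non-injectivity issue in part~(2) exactly as in your first suggestion---it passes to the image group $G = F[[1..n];\mathcal{S}_n]$ and notes that each element of $G$ is hit $n!/|G|$ times---so your anticipation of that point matches the paper's argument precisely.
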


\begin{proof}We prove each part separately.
\begin{enumerate}

 \item We have
\begin{align*}
 Z_F(z,0,0,\dots) &= \sum_{n \geq 0}\frac{1}{n!}  \sum_{\pi \in S_n} \fix(F[\pi]) z^{c_1(\pi)}  0^{c_2(\pi)} 0^{c_3(\pi)} \cdots  \\
&= \sum_{n \geq 0} \frac{1}{n!} \fix(F[\Id_{[1..n]}]) z^n \\
&= \sum_{n \geq 0}  f(n) \frac{z^n}{n!}.
\end{align*}

 \item Now,
\begin{align*}
 Z_F(z, z^2, z^3, \dots ) &= \sum_{n \geq 0}\frac{1}{n!}  \sum_{\pi \in S_n} \fix(F[\pi]) z^{c_1(\pi)}  z^{2 c_2(\pi)} z^{3 c_3(\pi)} \cdots  \\
&= \sum_{n \geq 0}\frac{1}{n!}  \sum_{\pi \in S_n} \fix(F[\pi]) z^n
\end{align*}

Let \( G = F[[1..n], \mathcal{S}_n] \) be the image of the action of \( \mathcal{S}_n \) on \(F[[1..n]] \) given by the second function of \( F \).
Then we have
\begin{align*}
\sum_{n \geq 0}\frac{1}{n!}  \sum_{\pi \in S_n} \fix(F[\pi]) z^n &= \sum_{n \geq 0}\frac{1}{n!}  \frac{n!}{|G|} \sum_{\fix(F[\pi]) \in G} \fix(F[\pi]) z^n \\
&= \sum_{n \geq 0}\frac{1}{|G|} \sum_{\tau \in G} \fix(\tau) z^n \\
&= \sum_{n \geq 0} \tilde{f}(n) z^n,
\end{align*}
where the last equality follows from Lemma \ref{lem:burnside}. \qedhere
\end{enumerate}
\end{proof}

\begin{defin}
\label{defin:speciesequality}
Let \(F\) and \(G\) be two species.
An \emph{isomorphism} of \(F\) to \(G\) is a family of bijections \( b_U : F[U] \rightarrow G[U] \) which satisfies the following condition:
for any bijection \( \pi : U \rightarrow V \) between two finite sets and any \( F \)-structure \(s \in F[U] \), we must have \( \pi \cdot b_U(s) = b_V(\pi \cdot s) \).
The two species are then said to be \emph{isomorphic} and we write \(F = G \).
\end{defin}

As with combinatorial classes from symbolic combinatorics, there is a set of operations on species which have corresponding operations on their generating functions (all three types).
These operations include addition, multiplication, composition, differentiation,
pointing, cartesian product, and functorial composition, just as in symbolic combinatorics.
For more information on these, and all other parts of the theory of species, see the 457-page book devoted to the subject \cite{speciesbook}.
The species analogs of combinatorial specifications are systems of equations, which can be created using the operators just mentioned and the notion of equality from Definition \ref{defin:speciesequality}.

\subsubsection{Weighted species}

The species equivalent of multiple parameters on combinatorial classes is weighted species, which we briefly discuss.

A \emph{weighting function} \( w: F[U] \rightarrow R[t] \) maps objects in \( F[U] \) to monomials in \( t \) over a ring \( R \subseteq \mathbb{C} \),
and an \( R[t] \)-\emph{weighted set} is a pair \( (S,w) \), where \( w: S \rightarrow R[t] \) is a weighting function on \( S \).
For such a pair \( (S,w) \), let \( |S|_w = \sum_{s \in S} w(s) \).


\begin{defin}
Let \( R \subseteq \mathbb{C} \) be a ring.
An \(R[t]\)-\emph{weighted species} \( F \) is a pair of functions,
\begin{enumerate}
 \item the first of which maps each finite set \(U\) to an \(R[t]\)-weighted set \( (F[U], w_U )\) such that \(|F[U]|_{w_U}\) converges, and
 \item the second of which maps each bijection \( \pi: U \rightarrow V \) to a function
\[ F[\pi]:(F[U], w_U) \rightarrow ( F[V], w_V ), \]
that preserves weights.
\end{enumerate}
The functions \( F[\pi] \) must satisfy the following properties:
\begin{enumerate}
 \item for each pair of bijections \( \pi : U \rightarrow V \) and \( \tau : V \rightarrow W \),
\[ F[\tau \circ \pi] = F[\tau] \circ F[\pi] , \]
 \item and if \(\Id_U:U \rightarrow U\) is the identity function,
\[ F[\Id_U] = \Id_{F[U]}.\]
\end{enumerate}
\end{defin}

\begin{defin}
Let \( F \) be a weighted species,
with weight functions \(w_n : F[[1..n]] \rightarrow R[t] \), for \( n \geq 0 \).
Then the exponential generating function of \( F \) is
\[ F_w(z) = \sum_{n \geq 0} |F[[1..n]]|_{w_n} \frac{z^n}{n!}; \]

\noindent the isomorphsim type generating function of \(F\) is
\[ \tilde{F}_w(z) = \sum_{n \geq 0} |T(F_n)|_w z^n, \]

\noindent where \( T(F_n) \) is defined as in Definition \ref{defin:t};
and the cycle index generating function of \( F \) is
\[ Z_{F_w}( \boldsymbol{z} ) = \sum_{n \geq 0} \frac{1}{n!} \left( \sum_{ \pi \in \mathcal{S}_n} |\Fix(F[\pi])|_{w_n} \boldsymbol{z}^{\boldsymbol{c}(\pi)} \right), \]
where, if \( \pi \in \mathcal{S}_n \),  \( \Fix(F[\pi]) \) is the set of \( F \)-structures on \( [1..n] \) fixed by \( F[\pi] \).
\end{defin}

Operations on weighted species can be defined similarly to the unweighted case.
For more information, see \cite{speciesbook}.

\begin{exa}
Let \( L_{ \geq 1} \) be the species of linear orders resricted to sets of at least one element;
let \( S \) be the singleton species with weight \( 1 \); and let \( W \) be the singleton species with weight \( q \).
Then the species \( T \) of ordered trees with number of internal nodes marked by \( q \) satisfies
\[ T=S + W \cdot (L_{\geq 1} \circ T) \]
where \( + \) is addition (the species analog of sum), \( \cdot \) is multiplication (the analog of product), and \( \circ \) is composition (the analog of substitution).\\
\end{exa}

Two notable packages for species, Darwin \cite{darwin1,darwin2}, and Devmol \cite{devmol} have been described in the literature, but no longer exist.

\subsection{Aldor-Combinat}
Authors: Ralf Hemmecke and Martin Rubey
\\
Website: \url{https://portal.risc.jku.at/Members/hemmecke/aldor/combinat}
\\

A project which started in 2006, Aldor-Combinat is an unfinished package primarily for working with species,
which is based on MuPAD-Combinat's implementation of symbolic combinatorics.
It is written in the Aldor language to be used with the Axiom\footnote{
 \url{http://axiom-developer.org/}
} computer algebra system.
The only included documentation contains many of the implementation details of the package, but
it is designed to be read by Aldor developers.\footnote{
 \url{http://www.risc.jku.at/people/hemmecke/AldorCombinat/combinat.html}
}

Like Sage uses for basic combinatorial objects, Aldor-Combinat uses a category-theoretic model to organize its functionality.
All species in Aldor-Combinat are objects in the  \codefont{CombinatorialSpecies} category and they must implement methods returning
exponential generating series,
isomorphism type generating series,
cycle index generating series,
a listing of structures,
and a listing of isomorphism types.

Aldor-Combinat has the following species predefined:
empty set,
set,
singleton,
linear order,
cycle,
permutation, and
set partition (whose isomorphism types are integer partitions).

\begin{exa}

We can compute the initial coefficients of the exponential generating function of the singleton species:
\begin{snippet}

\codefont{L == Integer;}

\codefont{E == EmptySetSpecies L;}

\codefont{gse: OrdinaryGeneratingSeries := generatingSeries \$ E;}

\codefont{import from Integer;}

\codefont{le: List Integer := [coefficient(gse, n) for n in 0..3];}

\end{snippet}
\noindent This assigns \codefont{[1, 0, 0, 0]} to \codefont{le}.

We can compute something similar for the singleton species:
\begin{snippet}

\codefont{M == Integer;}

\codefont{S == SingletonSpecies M;}

\codefont{gss: ExponentialGeneratingSeries := generatingSeries \$ S;}

\codefont{import from Integer;}

\codefont{ls: List Integer := [coefficient(gss, n) for n in 0..3];}
\end{snippet}

\noindent This assigns \codefont{[0, 1, 0, 0]} to \codefont{ls}.
\end{exa}

Let \( F \) be a species.
Define, for each \( n \geq 0 \), the \emph{species \( F \) restricted to \( n \)} by
\[ F_n[U] = \begin{cases}
 F[U] & \text{if } |U|=n, \\
 \emptyset & \text{otherwise},
\end{cases}
\]
for all finite sets \( U \).
Aldor-Combinat allows species to be restricted with \codefont{RestrictedSpecies},
which is useful for defining other species.

Species may be defined implicitly or explicitly with the following
operations:
addition,
multiplication,
composition, and
functorial composition.

\subsection{Sage: combinatorial species}
\label{sec:sagespecies}

Download: \url{http://www.sagemath.org/download.html}
\\
Website: \url{http://sagemath.org/doc/reference/combinat/species.html}
\\

This section covers Sage's species functionality.
For information on Sage's basic combinatorial objects functionality and an overview Sage as a CAS, see Section \ref{sec:alas}.
\\

Sage's capabilities for working with combinatorial species,
which began development around 2008,
are based on those in Aldor-Combinat (covered in the previous section).
A project roadmap describes future plans for the project.\footnote{
 \url{http://trac.sagemath.org/sage_trac/ticket/10662}
}

Sage offers built-in classes representing the
cycle,
partition,
permutation,
linear-order,
set, and
subset
species.
Also, the \codefont{CharacteristicSpecies(n)} method returns the \emph{characteristic species on \codefont{n}}, which
yields
one object on sets of size \codefont{n}, and no objects on any other set (i.e.\ the restriction to \codefont{n} of the set species \(X\), defined \(X[U]=U\)).

\begin{exa}
We assign an empty set species to \codefont{E}:

\begin{snippet}
\codefont{sage: E = species.EmptySetSpecies()}
\end{snippet}

We list all structures of \codefont{X} on the empty set and the set \( \{1,2\} \):

\begin{snippet}
\codefont{sage: E.structures([]).list()}

\codefont{[\{\}]}

\codefont{sage: E.structures([1,2]).list()}

\codefont{[]}
\end{snippet}

We find the first four coefficients of the exponential generating function of \codefont{X}:

\begin{snippet}
\codefont{sage: E.generating\_series().coefficients(4)}

\codefont{[1, 0, 0, 0]}
\end{snippet}

Since the empty set species is isomorphic to the characteristic species on \( 0 \), we get identical output using \codefont{CharacteristicSpecies(0)}:

\begin{snippet}
\codefont{sage: C0 = species.CharacteristicSpecies(0)}

\codefont{sage: C0.structures([]).list()}

\codefont{[\{\}]}

\codefont{sage: C0.structures([1,2]).list()}

\codefont{[]}

\codefont{sage: C0.generating\_series().coefficients(4)}

\codefont{[1, 0, 0, 0]}
\end{snippet}

The characteristic species on \( 1 \) is isomorphic to the singleton species:

\begin{snippet}
\codefont{sage: C1 = species.CharacteristicSpecies(1)}

\codefont{sage: C1.structures([1]).list()}

\codefont{[1]}

\codefont{sage: C1.structures([1,2]).list()}

\codefont{[]}

\codefont{sage: C1.generating\_series().coefficients(4)}

\codefont{[0, 1, 0, 0]}
\end{snippet}
\end{exa}

The operations on species of
addition,
multiplication,
composition, and
functorial composition are supported.

Species can be defined explicitly or implicitly (\codefont{define}).
When species are created, a weight can be specified, as well as a restriction on the size of the set.

Each species object \codefont{B} implements a number of useful methods:
\codefont{B.structures(set)} is the set of \codefont{B}-structures on the set \codefont{set},
\codefont{B.isotypes(set)} is the set of equivalence classes of \codefont{B} structures (isomorphism types) on the set \codefont{set},
\codefont{cycle\_index\_series()} is the cycle index series,
\codefont{generating\_series()} is the exponential series, and
\codefont{isotype\_generating\_series()} is the isomorphism type series.

\begin{exa}
We wish to count unlabeled ordered trees by total number of nodes and number of internal nodes.
To achieve this, we begin by assigning a weight of \(1\) to the leaves and \(q\) to internal nodes, which are each singleton species:

\begin{snippet}
\codefont{sage: q = QQ['q'].gen()}

\codefont{sage: leaf = species.SingletonSpecies()}

\codefont{sage: internal\_node = species.SingletonSpecies(weight=q)}
\end{snippet}

Now we define a species \codefont{T} representing the trees, defined as \codefont{leaf + internal\_node*L(T)}, where \codefont{L} is the species of linear orders restricted to sets of size \(1\) or greater:

\begin{snippet}
\codefont{sage: L = species.LinearOrderSpecies(min=1)}

\codefont{sage: T = species.CombinatorialSpecies()}

\codefont{sage: T.define(leaf + internal\_node*L(T))}
\end{snippet}

All that remains, since the trees are unlabeled, is to compute the coefficients of the isomorphism type generating function:

\begin{snippet}
\codefont{sage: T.isotype\_generating\_series().coefficients(6)}

\codefont{[0, 1, q, q\^{}2 + q, q\^{}3 + 3*q\^{}2 + q, q\^{}4 + 6*q\^{}3 + 6*q\^{}2 + q]}
\end{snippet}
\end{exa}

Further examples may be found in a Sage species demo by Mike Hansen, one of the developers.\footnote{
 \url{http://sage.math.washington.edu/home/mhansen/CombinatorialSpeciesDemo.html}
}

\section{Asymptotics}
\label{sec:asymptotics}
\subsection{Mathematical background}
\subsubsection{Asymptotic scales and series}

\begin{defin}
Let \( P \) be an open set and let \( L \) be a limit point of \( P \).
Let \(S_P\) be a set of functions from \( P \) to \( \mathbb{C} \).
Then the set \(S_P \) is an \emph{asymptotic scale} at \(L \) iff for every pair of different functions \( \phi_1, \phi_2 \in S_P \), the limit of \( \phi_1(x)/\phi_2(x) \) as \(x \rightarrow L\) is either \(0\) or \(+\infty\).
\end{defin}

\begin{exa}
\label{exa:polys}
If \( P = \mathbb{Z}_{\geq0} \cup \{ \infty \} \), \(L = \infty \), then an asymptotic scale is \( S_P= \{n \mapsto 1, n \mapsto n, n \mapsto n^2,\dots \} \).
\end{exa}

The following definition, which comes from \cite{gdevexamples} (as does the previous definition), generalizes Definition \ref{def:asymptoticseries} for terminating series:
\begin{defin}
Let \(S_P\) be an asymptotic scale at \(L\) and let \(f\) be a complex valued function on \( P \).
Then \(f\) is said to admit a \emph{terminating asymptotic series} at \(L\) iff there exists a sequence \( (a_1, \dots, a_n) \) of complex numbers and a sequence \( (\phi_1, \dots, \phi_n) \) of elements of \(S_P\) such that
\[ \forall { i \in \{1,\dots, n \}, \phi \in S_P}: \qquad \phi_i(x) = o(\phi(x)) \implies f(x) - \sum_{j=1}^{i} a_j \phi_j(x) = o(\phi(x)) \text{ as } x \rightarrow L.\]
\end{defin}

\begin{exa}
With \(P, L,\) and \(S_L\) defined as in Example \ref{exa:polys}, let \(f_k(n) = [z^n] \frac{z^k}{(1-z)^k}\) be the number of integer compositions of \(n\) with \(k\) parts.
Then a terminating asymptotic series for \(f_k(n)\) at \(L\) consists of the complex numbers \( \left( \frac{1}{(k-1)!}, -\frac{k(k-1)}{2(k-1)!} \right) \) and functions \( ( n \mapsto n^{k-1}, n \mapsto n^{k-2}) \) since
\[ f_k(n) = \frac{n^{k-1}}{(k-1)!} - \frac{k(k-1) n^{k-2}}{2 (k-1)!}  + o(n^{k-2}). \]
\end{exa}

Various CASes offer support for generating and working with asymptotic series expansions, including Maple, Mathematica, Maxima,
and MATLAB's Symbolic Math Toolbox (the current product containing the code of MuPAD)\footnote{
 \url{http://www.mathworks.com/help/toolbox/mupad/stdlib/asympt.html}
};
in this section we include a couple of implementations of special interest.

\subsubsection{Singularity analysis}
Singularity analysis is a method due to Flajolet and Odlyzko which shows how to compute asymptotic expressions for the coefficients of some classes of generating functions \cite{singularityanalysis}.
In order to apply the theorem, we forget about \emph{formal} power series and treat generating functions as elements of \( C^\omega(\mathbb{C}; 0) \) instead of \( \mathbb{Q}[[z]] \).
In its simplest form, the method can be expressed as the following theorem:
\begin{thm}[Singularity analysis]
Fix the range of  \( \Arg \) to be \( [-\pi, \pi) \).
Let \( F \in C^\omega(\mathbb{C}; 0) \) be a function analytic in a domain
\[ D = \left\{z : |z| \leq s_1, |\Arg(z-s)| > \frac{\pi}{2} - \eta \right\}, \]
where \( s, s_1 > s, \) and $\eta$ are three positive real numbers.  Assume that, with \( \sigma(u) = u^\alpha \log^\beta u \) and \( \alpha \notin \{0, -1, -2, \dots \} \), we have
\[ F(z) \sim \sigma \left( \frac{1}{1-z/s} \right) \qquad \text{ as } z \rightarrow s \text{ in } D. \]
Then, as \( n \rightarrow \infty \), the Maclaurin coefficients of \( F \) satisfy
\[ [z^n]F(z) \sim s^{-n} \frac{\sigma(n)}{n \Gamma(\alpha)}. \]
\end{thm}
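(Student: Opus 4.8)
The plan is to follow the Flajolet--Odlyzko contour method. First I would reduce to the case $s=1$: setting $G(w) = F(sw)$, the function $G$ is analytic in the rescaled domain $\{|w|\le s_1/s,\ |\Arg(w-1)|>\pi/2-\eta\}$, satisfies $G(w)\sim\sigma(1/(1-w))$ as $w\to 1$ in that domain, and $[z^n]F(z)=s^{-n}[w^n]G(w)$; so it suffices to prove $[w^n]G(w)\sim \sigma(n)/(n\Gamma(\alpha)) = n^{\alpha-1}(\log n)^{\beta}/\Gamma(\alpha)$. Throughout I would work from Cauchy's coefficient formula $[w^n]g(w)=\frac{1}{2\pi i}\oint g(w)\,w^{-n-1}\,dw$, deforming the circle of integration to a positively oriented loop that hugs the singularity at $w=1$: a small circular notch of radius $\sim 1/n$ around $1$, two rectilinear segments leaving the notch, and an outer circular arc of radius slightly larger than $1$. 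The hypothesis that $D$ is a disk with a sector at $1$ removed is exactly what makes such a contour admissible.

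The core computation is the coefficient asymptotics of the reference function itself, $[w^n]\sigma(1/(1-w)) = [w^n](1-w)^{-\alpha}\bigl(\log\tfrac{1}{1-w}\bigr)^{\beta}$. On the contour above I would substitute $w = 1 + t/n$, so that $(1-w)^{-\alpha}=n^{\alpha}(-t)^{-\alpha}$, $w^{-n-1}\to e^{-t}$, $\log\frac{1}{1-w}=\log n-\log(-t)\sim\log n$, and $dw=dt/n$; in the limit $n\to\infty$ the notched contour becomes the classical Hankel contour $\mathcal H$ around the negative real axis, and one recovers Hankel's representation $\frac{1}{2\pi i}\int_{\mathcal H}e^{-t}(-t)^{-\alpha}\,dt=\frac{1}{\Gamma(\alpha)}$. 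Collecting the powers of $n$ and pulling out the factor $(\log n)^{\beta}$ gives $[w^n]\sigma(1/(1-w))\sim n^{\alpha-1}(\log n)^{\beta}/\Gamma(\alpha)$; the branch conventions for $(1-w)^{-\alpha}$ and $\log$ are fixed by $\Arg\in[-\pi,\pi)$, and $\alpha\notin\{0,-1,-2,\dots\}$ guarantees $1/\Gamma(\alpha)\ne 0$ so the estimate is non-degenerate.

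Next I would transfer the error term. Writing $R(w) = G(w) - \sigma(1/(1-w))$, the hypothesis $G(w)\sim\sigma(1/(1-w))$ means $R(w) = o\bigl(\sigma(1/(1-w))\bigr)$ uniformly as $w\to1$ in the domain, so for any $\varepsilon>0$ there is a sub-domain of the same shape on which $|R(w)|\le\varepsilon\,|1-w|^{-\Re\alpha}\bigl|\log\tfrac{1}{1-w}\bigr|^{\beta}$. Deforming the Cauchy integral of $R(w)w^{-n-1}$ onto the boundary of this sub-domain (legitimate since $G$, hence $R$, is analytic there) and splitting it into the notch, the rectilinear parts, and the outer arc: on the outer arc $|w^{-n-1}|\le\rho^{-n-1}$ with $\rho>1$, exponentially small against the bounded integrand, while on the notch and the rectilinear parts the same change of variables and estimates as in the base case bound the contribution by $\varepsilon\cdot C\cdot n^{\Re\alpha-1}(\log n)^{\beta}$. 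Since $\varepsilon$ is arbitrary, $[w^n]R(w)=o\bigl(n^{\alpha-1}(\log n)^{\beta}\bigr)$. Adding this to the base-case estimate gives $[w^n]G(w)\sim n^{\alpha-1}(\log n)^{\beta}/\Gamma(\alpha)=\sigma(n)/(n\Gamma(\alpha))$, and undoing the scaling $w=z/s$ yields the theorem.

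The main obstacle is the uniform contour estimate underlying both the base case and the transfer step: one must show that the integral of $|1-w|^{-\Re\alpha}\bigl|\log\tfrac{1}{1-w}\bigr|^{\beta}|w|^{-n}\,|dw|$ over the notched contour is $\Theta\bigl(n^{\Re\alpha-1}(\log n)^{\beta}\bigr)$, verify that the substitution $w=1+t/n$ genuinely converges to the Hankel integral for $1/\Gamma(\alpha)$ with the logarithmic factor behaving as claimed, and confirm that the sector removed from $D$ is wide enough to keep the rectilinear segments and the outer arc negligible. Everything else is routine bookkeeping once these estimates are in hand.
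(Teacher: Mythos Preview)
The paper does not actually prove this theorem: it is a survey of software packages, and the singularity analysis theorem is merely stated with a pointer to \cite{singularityanalysis} and \cite{ac} for details and generalizations. Your sketch follows exactly the Flajolet--Odlyzko contour method from those references (rescale to $s=1$, Cauchy's formula on a Hankel-type notched contour, base case via Hankel's integral for $1/\Gamma(\alpha)$, then the $o$-transfer lemma), so it is consistent with the approach the paper implicitly defers to.
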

Generalizations are discussed at length in \cite{ac}.

\subsubsection{Saddle-point asymptotics}

The saddle-point method is another way of computing asympotic expressions for the coefficients of generating functions.
In \cite{hayman}, W.K.\ Hayman defined a set of functions called the \emph{H-admissible functions} for which the following theorem holds:
\begin{thm}
If \( F \) is a function defined around the origin by \( F(z) = \sum_{n \geq 0} f(n) z^n \) and \( F \) is H-admissible, then
\[ f(n) \sim \frac{F(r(n))}{r(n)^n \sqrt{2 \pi b(r(n))}}, \text{ as } n \rightarrow \infty \]
where \(r(n) \) is the smallest positive root of \(r(n) F'(r(n)) / F(r(n)) = n\) and \(b(r) = r D_r (r F'(r)) \).
\end{thm}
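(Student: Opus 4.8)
The plan is to establish this by the saddle-point method applied to Cauchy's coefficient formula, with the defining clauses of Hayman's notion of H-admissibility doing the analytic heavy lifting. Writing $R$ for the radius of convergence of $F$ and $a(r) := rF'(r)/F(r)$ --- so that the quantity $b(r)$ of the statement equals $r\,a'(r)$ --- I would begin from
\[ f(n) = \frac{1}{2\pi i}\oint_{|z|=r}\frac{F(z)}{z^{n+1}}\,dz = \frac{1}{2\pi r^n}\int_{-\pi}^{\pi} F(re^{i\theta})\,e^{-in\theta}\,d\theta, \qquad 0<r<R, \]
and then fix $r=r(n)$ to be the saddle point, i.e.\ the positive solution of $a(r)=n$. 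Existence and uniqueness of this solution for all large $n$ follow because H-admissibility forces $b(r)\to\infty$ as $r\to R^-$, hence $a'(r)=b(r)/r>0$ eventually, so $a$ is continuous and eventually strictly increasing; together with the growth $a(r)\to\infty$ (also part of the H-admissibility package) the intermediate value theorem produces $r(n)$. Since, for $F$ H-admissible, the coefficients are eventually nonnegative and $F(r)>0$ for real $r$ near $R$, the proposed main term is a genuine positive real.

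Next I would split the integral at the angle $\delta=\delta(r(n))$ provided by the definition, into a central arc $|\theta|\le\delta$ and a tail $\delta\le|\theta|\le\pi$. On the tail the H-admissibility estimate $F(re^{i\theta})=o\!\left(F(r)/\sqrt{b(r)}\right)$, uniform in $\theta$, shows that this part of the integral contributes $o\!\left(F(r(n))\,r(n)^{-n}/\sqrt{b(r(n))}\right)$, which is negligible against the claimed main term. On the central arc I would insert the local Gaussian approximation $F(re^{i\theta})\sim F(r)\exp\!\left(ia(r)\theta-\tfrac12 b(r)\theta^2\right)$, valid uniformly for $|\theta|\le\delta$ by H-admissibility; because $r(n)$ was chosen with $a(r(n))=n$, the linear phase $e^{i(a(r)-n)\theta}$ is identically $1$, leaving
\[ \frac{F(r(n))}{2\pi r(n)^n}\int_{-\delta}^{\delta}e^{-\frac12 b(r(n))\theta^2}\,d\theta. \]

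Finally I would complete the Gaussian tails: since $b(r(n))\,\delta(r(n))^2\to\infty$ (a property of the chosen $\delta$), extending the integral to all of $\mathbb{R}$ changes it by an exponentially small relative amount, and $\int_{\mathbb{R}}e^{-\frac12 b\theta^2}\,d\theta=\sqrt{2\pi/b}$; combining this central contribution $F(r(n))\,r(n)^{-n}\bigl(2\pi b(r(n))\bigr)^{-1/2}$ with the negligible tail gives $f(n)\sim F(r(n))/\bigl(r(n)^n\sqrt{2\pi b(r(n))}\bigr)$. I expect the main obstacle to be the bookkeeping of uniformity: one must verify that the $\sim$ and $o(\cdot)$ in the H-admissibility clauses, which hold uniformly in $\theta$ as $r\to R$, genuinely survive integration over the (possibly shrinking) arc $[-\delta,\delta]$ --- that is, that the error committed in replacing $F(re^{i\theta})$ by its Gaussian model, once integrated, is $o\!\left(\int e^{-b\theta^2/2}\,d\theta\right)$ and not merely pointwise small. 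Tracking that, together with the implicit dependence of $a$, $b$, and $\delta$ on $r(n)$, is the delicate part; the skeleton of the argument is the standard saddle-point template.
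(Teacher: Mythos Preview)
The paper does not actually supply a proof of this theorem: it is stated purely as background material in the ``Saddle-point asymptotics'' subsection, with a citation to Hayman's original article. Your sketch is the standard saddle-point argument that Hayman himself gave (Cauchy integral on the circle $|z|=r(n)$, splitting into a central arc $|\theta|\le\delta$ and a tail, Gaussian approximation on the central arc via the defining clauses of H-admissibility, and completion of the Gaussian), so it is correct and in line with the source the paper cites; there is nothing in the paper to compare it against beyond that reference.
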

Useful information on which functions are H-admissible is given by the following theorem (also from \cite{hayman}):
\begin{thm}
The following are three properties of H-admissible functions.
\begin{enumerate}
\item Let \( \alpha \) and \(\beta_1\) be positive real numbers, and let \( \beta_2 \) and \( \beta_3 \) be real numbers.
Then \( F \) defined by
\[ F(z) = \exp \left( \beta_1 (1-z)^{-\alpha} \left( \frac{1}{z} \log \frac{1}{1-z} \right)^{\beta_2} \left( \frac{2}{z} \log \left(\frac{1}{z} \log\frac{1}{1-z} \right) \right)^{\beta_3} \right) \]
is H-admissible.
\item If \( F \) and \( G \) are H-admissible, and \(P\) is a polynomial function with real coefficients and positive leading coefficient, then \( \exp(F), F+G, F+P, P(F), \) and \( P \cdot F\) are H-admissible.
\item If \(P \) is a polynomial function such that \( P(z) \) cannot be written as \(P(Q(z^k)) \) for any polynomial function \( Q \) and \( k >1 \), then \( \exp(P) \) is H-admissible.
\end{enumerate}
\end{thm}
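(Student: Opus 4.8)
The plan is to derive all three parts from Hayman's original analysis \cite{hayman} (reorganized in \cite{ac}), since each one reduces to verifying the three defining conditions of H-admissibility for a power series $F(z) = \sum_{n \geq 0} f(n) z^n$ analytic in a disc $|z| < R$ with $R \leq \infty$: (a) $F$ is real and positive on some interval $(R_0, R)$; (b) the auxiliary functions $a(r) = r F'(r)/F(r)$ and $b(r)$ (the one in the preceding theorem) both tend to $+\infty$ as $r \to R^-$; and (c) there is an angular cutoff $\delta(r) \in (0,\pi)$ for which, as $r \to R^-$, $F(re^{i\theta}) \sim F(r)\exp\!\bigl(i\theta\, a(r) - \tfrac12 \theta^2 b(r)\bigr)$ uniformly for $|\theta| \leq \delta(r)$, while $F(re^{i\theta}) = o\bigl(F(r)/\sqrt{b(r)}\bigr)$ uniformly for $\delta(r) \leq |\theta| \leq \pi$. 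Granting these, the saddle-point formula quoted in the preceding theorem follows from the usual splitting of the Cauchy integral into a central Gaussian part and a negligible tail.

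For part 2 I would exploit additivity of $\log F$ under products: $a_{FG} = a_F + a_G$, $b_{FG} = b_F + b_G$, so (a) and (b) are immediate for $FG$ and its central Gaussian factors as the product of those for $F$ and $G$; the complementary-angle bound uses $\delta = \min(\delta_F,\delta_G)$. For $F+G$ one argues that the summand larger on the positive axis governs the saddle and the central range, and that outside $|\theta|\leq\min(\delta_F,\delta_G)$ each of $|F(re^{i\theta})|$, $|G(re^{i\theta})|$ is already $o$ of the larger central value. For $\exp(F)$: an H-admissible $F$ has $a_F(r) \to \infty$, hence $F(r) \to \infty$ faster than any power of $r$; since $\log\exp(F) = F$ we get $a_{\exp F}(r) = r F'(r) = a_F(r) F(r) \to \infty$ and a similarly divergent $b$, and both estimates in (c) transfer because exponentiation only amplifies the gap $F(r) - \operatorname{Re} F(re^{i\theta})$. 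For $P(F)$, $P \cdot F$ (with $P$ of positive leading coefficient) and $F + P$, again using that $F(r) \to \infty$ faster than any power of $r$ we get $P(F) \sim c\,F^{\deg P}$, $P \cdot F \sim c\, r^{\deg P} F$ and $F + P \sim F$; the dominant term is then (a constant times) a power or product of H-admissible functions already covered, and one checks that the lower-order corrections do not disturb (a)--(c).

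Parts 1 and 3 I would prove by direct verification. For part 1, with $E(z)$ the exponent: a routine computation shows $a(r)$ and $b(r)$ tend to $+\infty$ as $r \to 1^-$ at rates governed by $(1-r)^{-1}$ and the iterated logarithms $\log\frac{1}{1-r}$, $\log\log\frac{1}{1-r}$; the central behaviour in (c) follows by expanding $E(re^{i\theta})$ to second order in $\theta$; and the tail bound rests on $|E(re^{i\theta})|$ being dominated, up to logarithmic factors, by $|1 - re^{i\theta}|^{-\alpha}$, which drops well below $E(r) \asymp (1-r)^{-\alpha}$ as soon as $|\theta|$ leaves a suitable shrinking neighbourhood of $0$ — and that threshold is exactly what one takes for $\delta(r)$. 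For part 3 ($R = \infty$), $a(r) = r P'(r)$ is a polynomial of degree $\deg P$ that diverges, and $b(r)$ likewise; the only genuinely arithmetic ingredient is the complementary-angle bound, where the hypothesis that $P$ is \emph{aperiodic} (not of the form $Q(z^k)$, $k > 1$) forces $\operatorname{Re} P(re^{i\theta}) < P(r)$ strictly for every $\theta \in (0,\pi]$ once $r$ is large, yielding the required exponentially small bound $|\exp P(re^{i\theta})| = o(\exp P(r))$.

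The hard part in every case is the complementary-angle (``tail'') estimate of condition (c), not the saddle-point bookkeeping: in part 1 it pins down the precise admissible $\delta(r)$ matched to the log-factor growth; in part 3 it is precisely where aperiodicity must be invoked and where a purely ``analytic'' argument breaks down; and in part 2 the closure under $F+G$ is the subtle one, since $F$ and $G$ may have incomparable angular decay profiles, so one must show that the smaller cutoff still works and that the sum preserves positivity near $R$.
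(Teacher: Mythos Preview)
The paper does not prove this theorem: it simply quotes the result from Hayman's original article and cites \cite{hayman}, so there is no ``paper's own proof'' to compare your attempt against. Your outline is a reasonable high-level summary of how Hayman's arguments run (verify the three defining conditions case by case, with the tail estimate being the crux), and in that sense it is consistent with the cited source rather than with anything developed in the present paper.

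One substantive caution about your sketch: the treatment of $F+G$ is too glib. Saying ``the summand larger on the positive axis governs the saddle'' glosses over the real difficulty, which is that $F$ and $G$ may be of comparable size on $(R_0,R)$ (neither dominating the other) while having different $a(r)$, $b(r)$, and $\delta(r)$; Hayman handles this by a more careful comparison showing that the central and tail estimates for the sum can be controlled jointly, not by simply discarding one summand. Your remark at the end that $F+G$ is ``the subtle one'' is correct, but the earlier sentence suggests an argument that would not actually close the gap.
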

The saddle-point method are discussed in further detail in \cite{ac}.

\subsubsection{Power series coefficients and the radius of convergence}
\begin{notation}[from \cite{ac}]
Given a sequence \( (a_0, a_1, a_2, \dots ) \) of rational numbers and \( \mathfrak{r} \in [0,\infty] \), we write
\[ a_n \bowtie \mathfrak{r}^n \]
iff
\[ \limsup_{n \geq 0} |a_n|^{1/n} = \mathfrak{r}, \]
where \(1/\infty = 0, 1/0 = \infty \).
\end{notation}

\begin{pro}
Given a sequence \( (a_0, a_1, a_2, \dots ) \) of rational numbers and \( \mathfrak{r} \in [0,\infty] \), we have
\[ a_n \bowtie \mathfrak{r} \text{ iff } a_n \sim \mathfrak{r}^{n}c(n), \]
where \( \limsup |c(n)|^{1/n} = 1 \).
In other words \( f_n \bowtie \mathfrak{r}\) iff \( 1/\mathfrak{r} \) is the radius of convergence of \( \sum_{n \geq 0} a_n z^n \).
\end{pro}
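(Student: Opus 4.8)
The plan is to peel the statement into its two halves: the ``in other words'' clause is exactly the Cauchy--Hadamard theorem, while the equivalence with $a_n\sim\mathfrak r^n c(n)$ is a short bookkeeping argument once the role of $c(n)$ is pinned down.

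First I would dispose of the radius-of-convergence reformulation. By definition $a_n\bowtie\mathfrak r^n$ means $\limsup_n|a_n|^{1/n}=\mathfrak r$. The root test shows that $\sum_{n\ge0}a_nz^n$ converges absolutely whenever $|z|\limsup_n|a_n|^{1/n}<1$ and has unbounded terms whenever $|z|\limsup_n|a_n|^{1/n}>1$, so its radius of convergence is exactly $R=1/\limsup_n|a_n|^{1/n}$ with the conventions $1/0=\infty$, $1/\infty=0$. Hence $R=1/\mathfrak r$ if and only if $\limsup_n|a_n|^{1/n}=\mathfrak r$, which is the claimed equivalence; this also covers the degenerate ends $\mathfrak r=0$ (an entire series) and $\mathfrak r=\infty$ (convergence only at $0$) without trouble.

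Next, the equivalence with $a_n\sim\mathfrak r^nc(n)$. For $0<\mathfrak r<\infty$ I would simply set $c(n)=a_n\mathfrak r^{-n}$; then $a_n=\mathfrak r^nc(n)$ identically, so the relation $a_n\sim\mathfrak r^nc(n)$ holds trivially, and since $|c(n)|^{1/n}=|a_n|^{1/n}/\mathfrak r$ we get that $\limsup_n|c(n)|^{1/n}=1$ is equivalent to $\limsup_n|a_n|^{1/n}=\mathfrak r$. For the reverse direction in general I would invoke the elementary lemma that if $u_n\sim v_n$, with $v_n$ eventually nonzero, then $\limsup_n|u_n|^{1/n}=\limsup_n|v_n|^{1/n}$: from $\tfrac12\le|u_n|/|v_n|\le2$ for large $n$ one gets $(\tfrac12)^{1/n}\le|u_n|^{1/n}/|v_n|^{1/n}\le2^{1/n}$ and both bounds tend to $1$. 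Applying this with $u_n=a_n$ and $v_n=\mathfrak r^nc(n)$ yields $\limsup_n|a_n|^{1/n}=\mathfrak r\cdot\limsup_n|c(n)|^{1/n}=\mathfrak r$, i.e.\ $a_n\bowtie\mathfrak r^n$.

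The main subtlety --- where I would spend the most care --- is the interface between the $\limsup$ formulation and the $\sim$ formulation at the boundary values $\mathfrak r\in\{0,\infty\}$, where $\mathfrak r^n$ is $0$ or ``$\infty$'' and the substitution $c(n)=a_n\mathfrak r^{-n}$ degenerates; there one must read the $c(n)$-statement as the limiting/vacuous case, or simply fall back on the radius-of-convergence characterization established in the first step. A secondary point to treat cleanly is the presence of zero terms among the $a_n$ or $c(n)$: such terms contribute $0$ to the set of subsequential limits of $|{\cdot}|^{1/n}$ and so do not disturb the $\limsup$ as long as infinitely many terms are nonzero, while if only finitely many are nonzero both sides degenerate to $\limsup=0$.
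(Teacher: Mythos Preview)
The paper states this proposition without proof, so there is no argument to compare against. Your proposal is correct: the ``in other words'' clause is precisely Cauchy--Hadamard, and for $0<\mathfrak r<\infty$ the choice $c(n)=a_n\mathfrak r^{-n}$ together with the lemma $u_n\sim v_n\Rightarrow\limsup|u_n|^{1/n}=\limsup|v_n|^{1/n}$ handles the first equivalence cleanly. Your caveats about the degenerate cases $\mathfrak r\in\{0,\infty\}$ and about zero terms are well placed; indeed the statement as written in the paper is somewhat informal at those edges (and note the paper's own notation writes $a_n\bowtie\mathfrak r^n$, not $a_n\bowtie\mathfrak r$, so there is already a minor typo in the proposition), so falling back on the radius-of-convergence characterization there is the right move.
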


\begin{defin}

Given a function \( F \)
with Maclaurin series expansion \( F(z) = \sum_{n \geq 0} f(n) z^n \) having radius of convergence \( r \), with a finite set of singularities at distance \( r \) from the origin,
the \emph{dominant directions} of \( F \) is the set \( \Theta \) where
\[ \theta \in \Theta \iff \theta \in [-\pi, \pi) \text{ and } F \text{ is singular at } r e^{i \theta}. \]
\end{defin}


\subsection{Gdev}
\label{sec:gdev}

Author: Bruny Salvy
\\
Download: \url{http://algo.inria.fr/libraries/\#down}
\\
Webpage: \url{http://algo.inria.fr/libraries/\#gdev}
\\
Last modified: March 2003\footnote{
 \url{http://algo.inria.fr/salvy/index.html}
}
\\

Gdev is a package for computing asymptotic expressions for functions and generating function coefficients,
which,
except for the \codefont{equivalent} function (described below),
has been superseeded by the MultiSeries package (see Section \ref{sec:multiseries}).\footnote{
 \url{http://algo.inria.fr/salvy/index.html}
}.
For more
information about the capabilities of gdev
see an article written by Bruno Salvy \cite{gdevexamples}.

Many packages provide routines for computing Taylor expansions, but the capabilities of gdev's function \codefont{gdev} go well beyond that particular asymptotic scale.
The function \codefont{gdev}
takes a function given by an explicit expression
and,
automatically choosing the asymptotic scale to use,
finds an asymptotic expansion around a point, which ends with a big O term.

The function \codefont{equivalent}, originally in \hyperref[sec:ananas]{ANANAS} from LUO, extracts an asymptotic expression for the coefficients of an explictly given generating function using singularity analysis or the saddle-point method.
See the next section for more information on the underlying algorithm.

\subsection{Lambda-Upsilon-Omega (LUO): asymptotics}
\label{sec:ananas}

Authors: Bruno Salvy and Paul Zimmermann
\\
Download: \url{http://www.loria.fr/~zimmerma/software/luoV2.1.tar.gz}
\\
Last modified: May 1995
\\
Website: \url{http://algo.inria.fr/libraries/libraries.html\#luo}
\\

This section covers ANANAS, the component of LUO for extracting asymptotic information from generating functions.
To find out about LUO's other component, ALAS, and how ANANAS fits in to the LUO system, see Section \ref{sec:alas}.
\\

Once ALAS produces a system of equations for \(A^{<1>}(z), \dots, A^{<r>}(z) \), an attempt to solve it is made.  If an explicit expression is obtained, it is passed to ANANAS, written by
Bruno Salvy.

Not all explicit expressions can be handled by ANANAS, however --- in particular, not all expressions produced by ALAS.

\begin{defin}
The set \( \mathscr{E} \) of functions is the set of generating functions corresponding to labeled classes defined by well-defined explicit specifications.  That is, the functions containing the monomial functions \(z \mapsto 1\) and \( z \mapsto z \), and closed under the operations \( \{+,\times, Q,L,E \} \), where
\[ Q(F) = \frac{1}{1-F}, \qquad L(F) = \log\frac{1}{1-F}, \qquad E(F) = \exp(F). \]
\end{defin}
ANANAS can handle some functions not in \( \mathscr{E} \),
but for the rest of this section, we only consider ANANAS's behavior on functions in \( \mathscr{E} \),
since the documentation focuses on this case.

The set \( \mathscr{E} \) is subdivided into three disjoint sets
\[ \mathscr{E} =  \mathscr{E}_{AL} \hspace{3pt} \dot{\cup} \hspace{3pt} \mathscr{E}_{\text{entire}} \hspace{3pt} \dot\cup \hspace{3pt} \mathscr{E}_{\text{other}}, \]
where \( \mathscr{E}_{\text{entire}} \) is the set of entire functions and \( \mathscr{E}_{\text{AL}} \) is the set of functions with \emph{algebraic-logarithmic} (AL) growth,
meaning that \(F \in \mathscr{E}_{\text{AL}} \) iff \( F \) has radius of convergence \(0 < r < \infty \) and there exist \(\alpha \in \mathbb{R}, k \in \mathbb{Z}_{\geq 0} \) such that,
\[ F(z) \sim \frac{1}{(1  -z/r)^\alpha} \log^k \frac{1}{1 -z/{r}} \qquad \text{ as } z \rightarrow {r^-} . \]

ANANAS determines which subset of \( \mathscr{E} \) a function is a member of and proceeds accordingly.
For the details of the steps involved in this choice, see \cite{assistant}.
ANANAS is able to handle some functions from \(\mathscr{E}_{\text{entire}}\) and \(\mathscr{E}_{\text{other}}\) using the saddle-point method, but the coverage is not as complete as for \( \mathscr{E}_{AL} \), where singularity analysis is very effective.
We briefly describe the algorithm used for the class \(\mathscr{E}_{\text{AL}}\);
for information on what ANANAS does with \(\mathscr{E}_{\text{entire}}\) and \(\mathscr{E}_{\text{other}}\), see \cite{assistant}.

Given a function in \(\mathscr{E}_{\text{AL}}\), ANANAS performs the algorithm Equivalent:
\begin{center}
  \captionof{algorithm}{Equivalent}
\begin{algorithmic}
\Require \( F \) is an expression corresponding to a function in \( \mathscr{E}_{\text{AL}} \)
\State{1. Determine the radius of convergence \( r \) of \(F\)}

\State{2. Find the dominant directions \( \Theta \) of \( F \),}
\If {\( F \) has a single dominant direction}
        \State 3.1 Obtain an algebraic-logarithmic asymptotic expansion of \( F \) around \( r \)
        \State 3.2 Apply singularity analysis
\Else
        \State 4.1 Apply a singularity-analysis style result for functions with more than one dominant singularity
\EndIf
\end{algorithmic}
\end{center}
Step \(1\) turns out to be simple; it takes the form of the Radius algorithm:

\begin{center}
  \captionof{algorithm}{Radius}
\begin{algorithmic}
\Require \( F \) is an expression corresponding to a function in \( \mathscr{E}_{\text{AL}} \)
\If{ \( F \) is a polynomial}
 \State {then return \( \infty \)}

\ElsIf{\( F \) matches \( \exp(G) \)}
 \State{return the result of Radius on \( G \)}

\ElsIf{ \( F \) is \( Q(G) \) or \( L(G) \) }
 \State{ return the smallest real positive root of \(g(x)=1\)}

\ElsIf{ \( F \) matches \( G_1 + G_2\) or \( G_2 \cdot G_2 \)}
 \State{return the minimum of Radius on \(G_1\) and \(G_2\) }

\EndIf
\end{algorithmic}
\end{center}
For more details on the rest of Equivalent, see \cite{assistant}.

\subsection{MultiSeries}
\label{sec:multiseries}

Authors: Bruno Salvy and Maplesoft
\\
Download: \url{http://algo.inria.fr/libraries/\#down}
\\
Website: \\
\indent \url{http://www.maplesoft.com/support/help/Maple/view.aspx?path=MultiSeries}
\\

Originally developed by Bruno Salvy at the INRIA Algorithms Group and now built into Maple releases,
the MultiSeries Maple package provides functionality for computing asymptotic series expansions that superseeds that provided in gdev, the subject of Section \ref{sec:gdev}.

MultiSeries contains the function \texttt{multiseries}, the sucessor to gdev's \texttt{gdev}, and four special cases of \texttt{multiseries} that cover the same functionality of core Maple functions: \texttt{asympt}, \texttt{series}, \texttt{limit}, and \texttt{taylor}.
According to the Maple documentation, ``the simplest use of the package is by overriding the standard \texttt{asympt}, \texttt{series}, \texttt{limit} using \texttt{with}. The corresponding MultiSeries functions are often more powerful than the default ones, but require more computational time.''
(Presumably \texttt{taylor} can be used in the same way.)

The \texttt{multiseries} function can take as arguments an expression \texttt{expr}, the limit point \texttt{a}, the variable \texttt{x} in the expression tending to \texttt{a}, and the truncation order \texttt{n}.
It returns, like \texttt{gdev}, an asymptotic expansion of \texttt{expr} as \texttt{x} tends to \texttt{a} ending in a big O term, in terms of an asymptotic scale automatically chosen.

It is also possible to specify some extra options such as choosing the path of approach, and there is support for choosing the asymptotic scale used.\footnote{
 \url{http://www.maplesoft.com/support/help/Maple/view.aspx?path=MultiSeries\%2fmultiseries}
}

\section{Conclusions}
\label{sec:conc}
A number of kinds of conclusions can be made from this document.\\

First, recall that we
\hyperlink{started}{started this document}
with the claim that ``[it is no longer] necessary to use error-prone pen and paper methods to perform an ever-growing set of mathematical procedures''.
Indeed, the packages in this document cover a lot of ground, and much of the progress is recent, but there is massive potential for more work both with mathematical algorithms that have been implemented and those that have not.

Note that there are few packages available for any one CAS/programming language.
The average mathematician cannot be expected to be familiar with more than one system, if any.
There is therefore a need for projects to be ported to systems without those capabilities, preferably the big three CASs: Maple, Mathematica and Sage.\\

%
%

Next, future package-writers can note that not all of the packages in this
document come with sufficient documentation for all uses.
To be maximally helpful, two forms of documentation should be available: one for mere users, and one for developers or other people interested in the implementation details.
Users want to know what the package can do and how to do those things.
Developers wishing to port or extend the software want to know the algorithms and code used.\\

Lastly, as casual observations, we can say that learning how to use a package can be a good way to learn a mathematical concept or increase one's understanding, and also that
these packages are for the most part not huge collaborative efforts ---
often there is only one author, sometimes two.\\

Writing symbolic packages is not trivial, since it requires skills from multiple disciplines, namely mathematics, computer science, and software engineering.
However, it is clear that the future of the field is bright, and
the global project to replace pen and paper, and ultimately more and more of
the problem solving, theorem proving and question answering now done only by
human mathematicians will continue to progress and advance significantly in the
years to come.

\section*{Acknowledgements}
\addcontentsline{toc}{section}{Acknowledgements}
The author thanks Daniel Panario and Brett Stevens for their help with
this document.

\addcontentsline{toc}{section}{References}

\bibliographystyle{plain}
\bibliography{software}

\section*{Appendix}
\addcontentsline{toc}{section}{Appendix}

\subsection*{Maple 16}
\addcontentsline{toc}{subsection}{Maple 16}
Maple is a modern commercial CAS whose development began in 1980 by the Symbolic Computation Group at the University of Waterloo, and, since 1988, has been developed and sold by Waterloo Maple Inc., a.k.a. Maplesoft.
The most recent version as of time of writing is Maple 16, released in 2012.

\subsection*{Mathematica 8}
\addcontentsline{toc}{subsection}{Mathematica 8}
Mathematica, another modern commercial CAS, is developed by Wolfram Research.
It began as a project called Symbolic Manipulation Program created by Stephen Wolfram around 1979 at Caltech,
and then after Wolfram Research was founded in 1987,
was renamed Mathematica.
Version 1.0 of Mathematica was released in 1988;
the most recent release was Mathematica 8, in 2010.

Mathematica is a Turing complete \emph{term rewriting system}, which means that all Mathematica programs compute by manipulating expressions (terms).
When an expression is \emph{evaluated} by Mathematica,
it searches through
a sequence of stored pattern-replacement rules (in which form Mathematica's mathematical routines exist) looking for a match between a pattern and a subexpression of the expression.
When a rule's pattern matches a subexpression, that subexpression is replaced according to the rule, and the new expression is evaluated.
For more on term rewriting systems, see \cite{term}, and for more on Mathematica programming, see its documentation and \cite{trott}.

\end{document}